\newcolumntype{C}{>{$}c<{$}}
\newcolumntype{L}{>{$}l<{$}}
\numberwithin{equation}{section}
\declaretheoremstyle[
  headfont=\normalfont\bf,
  bodyfont=\normalfont\itshape,
  spaceabove=1em,
  spacebelow=1em,
]{results}
\declaretheorem[
  style=results,
  title=Theorem,
  numberwithin=section,
]{theorem}
\declaretheorem[
  style=results,
  title=Lemma,
  sibling=theorem,
]{lemma}
\declaretheorem[
  style=results,
  title=Corollary,
  sibling=theorem,
]{corollary}
\declaretheoremstyle[
  headfont=\normalfont\bf,
  bodyfont=\normalfont,
  spaceabove=1em,
  spacebelow=1em,
  qed={$\lhd$}
]{definitions}
\declaretheorem[
  style=definitions,
  title=Definition,
  sibling=theorem,
]{definition}
\declaretheoremstyle[
  headfont=\normalfont\bf,
  bodyfont=\normalfont,
  spaceabove=1em,
  spacebelow=1em,
]{other}
\declaretheorem[
  style=other,
  title=Remark,
  sibling=theorem,
]{remark}
\declaretheorem[
  style=other,
  title=Notation,
  sibling=theorem,
]{notation}
\declaretheorem[
  style=other,
  title=Example,
  sibling=theorem,
]{example}
\declaretheorem[
  style=other,
  title=Observation,
  sibling=theorem,
]{observation}
\theoremstyle{remark}
\newtheorem*{acknowledgements}{Acknowledgements}
\newcommand{\sd}[1]{\ensuremath{\textsc{SizeDepth}_{#1}\xspace}}
\newcommand{\usd}[1]{\ensuremath{\textsc{UnbSizeDepth}_{#1}\xspace}}
\newcommand{\ts}[1]{\ensuremath{\textsc{TimeSpace}_{#1}\xspace}}
\newcommand{\AC}{\ensuremath{\mathrm{AC}}\xspace}
\newcommand{\NC}{\ensuremath{\mathrm{NC}}\xspace}
\newcommand{\FO}{\ensuremath{\mathrm{FO}}\xspace}
\newcommand{\ACRing}[2][]{\mathrm{AC}^{#2}_{#1}}
\newcommand{\NCRing}[2][]{\mathrm{NC}^{#2}_{#1}}
\newcommand{\FORing}[2][]{\ifthenelse{\equal{#1}{}}{\mathrm{FO}_{#2}}{\mathrm{FO}_{#2}[#1]}}
\newcommand{\SUM}{\ensuremath{\textnormal{SUM}_{R}}\xspace}
\newcommand{\SUMR}{\ensuremath{\textnormal{SUM}_{\R}}\xspace}
\newcommand{\PROD}{\ensuremath{\textnormal{PROD}_{R}}\xspace}
\newcommand{\PRODR}{\ensuremath{\textnormal{PROD}_{\R}}\xspace}
\newcommand{\MAX}{\ensuremath{\textnormal{MAX}_{R}}\xspace}
\newcommand{\GPR}{\ensuremath{\mathrm{GPR}}\xspace}
\newcommand{\GFR}{\ensuremath{\mathrm{GFR}_{R}}\xspace}
\newcommand{\GFRi}{\ensuremath{\GFR^i}\xspace}
\newcommand{\GFRb}{\ensuremath{\mathrm{GFR}_{R, \mathrm{bound}}}\xspace}
\newcommand{\GFRbi}{\ensuremath{\GFRb^i}\xspace}
\newcommand{\BIT}{\text{BIT}\xspace}
\newcommand{\bO}{\ensuremath{\mathcal{O}}}
\newcommand{\C}{\ensuremath{\mathbb{C}}\xspace}
\newcommand{\Q}{\ensuremath{\mathbb{Q}}\xspace}
\newcommand{\R}{\ensuremath{\mathbb{R}}\xspace}
\newcommand{\N}{\ensuremath{\mathbb{N}}\xspace}
\newcommand{\Z}{\ensuremath{\mathbb{Z}}\xspace}
\newcommand{\F}{\ensuremath{\mathbb{F}}\xspace}
\DeclareMathOperator{\sign}{sign}
\DeclareMathOperator{\val}{val}
\newcommand{\struc}[1]{\ensuremath{\mathcal{#1}}}
\DeclarePairedDelimiter{\abs}{\lvert}{\rvert}
\newcommand{\struct}{\textsc{struc}}
\newcommand{\structR}{\ensuremath{\struct_{R}}}
\newcommand{\ol}[1]{\ensuremath{\overline{#1}}\xspace}
\newcommand{\maxi}[1][i]{\ensuremath{{\textbf{max}_{#1}}}}
\newcommand{\maxib}[1][i]{\ensuremath{{\textbf{max}_{#1, \mathrm{bound}}}}}
\newcommand{\sumi}[1][i]{\ensuremath{{\textbf{sum}_{#1}}}}
\newcommand{\sumib}[1][i]{\ensuremath{{\textbf{sum}_{#1,\mathrm{bound}}}}}
\newcommand{\prodi}[1][i]{\ensuremath{{\textbf{prod}_{#1}}}}
\newcommand{\prodib}[1][i]{\ensuremath{{\textbf{prod}_{#1,\mathrm{bound}}}}}
\newcommand{\cC}{\ensuremath{\mathcal{C}}}
\newcommand{\simsubseteq}{\ensuremath{\subseteq_\mathrm{sim}}}
\newcommand{\simsubsetneq}{\ensuremath{\subsetneq_\mathrm{sim}}}
\newcommand{\simequal}{\ensuremath{=_\mathrm{sim}}}
\newcommand{\Arb}[1]{\ensuremath{\mathrm{Arb}_{#1}}}
\newcommand{\un}{\ensuremath{\mathrm{unary}\xspace}}
\newcommand{\uval}{\ensuremath{\mathrm{uval}\xspace}}
\renewcommand{\d}{\ensuremath{d}\xspace}
\title[Logical Characterizations of Circuit Classes]{Logical Characterizations of algebraic Circuit Classes over Integral domains}
\author[Barlag]{Timon Barlag}
\address{ 
	Institut f\"ur Theoretische Informatik, 
	Leibniz Universit\"at Hannover, 
	30167 Hannover, Germany}
\email{gaube@thi.uni-hannover.de}
\email{barlag@thi.uni-hannover.de}
\author[Chudigiewitsch]{Florian Chudigiewitsch}
\address{
Institut f\"ur Theoretische Informatik,
Universit\"at zu L\"ubeck,
23562 Lübeck, Germany
}
\email{fch@tcs.uni-luebeck.de}
\author[Gaube]{Sabrina A. Gaube}
\date{\today}
\keywords{Algebraic circuits, descriptive complexity}
\begin{document}

\begin{abstract}

We present an adapted construction of algebraic circuits over the reals introduced by Cucker and Meer 
to arbitrary infinite integral domains and generalize the $\ACRing[\R]{}$ and $\NCRing[\R]{}$-classes for this setting. 
We give a theorem in the style of Immerman's theorem 
which shows that for these adapted formalisms, sets decided by circuits of constant depth and polynomial size are the same as sets definable by a suitable adaptation of first-order logic.
Additionally, we discuss a generalization of the guarded predicative logic by Durand, Haak and Vollmer and we show characterizations for the $\ACRing[R]{}$ and $\NCRing[R]{}$ hierarchy.
Those generalizations apply to the Boolean $\mathrm{AC}$ and $\mathrm{NC}$ hierarchies as well. Furthermore, we introduce a formalism to be able to compare some of the aforementioned complexity classes with different underlying integral domains.
\end{abstract}

\maketitle





\section{Introduction}\label{section:introduction}
Boolean circuits as a computational model are a fundamental concept in the study of theoretical computer science. Mainly in computational complexity, Boolean circuits play a central part in the analysis of parallel algorithms and the corresponding complexity classes, enabling a finer view and providing new proof methods, especially for subpolynomial complexity classes. An obvious generalization of Boolean circuits is that instead of dealing with truth values as inputs -- or the field $\Z_2$ for the algebraically minded -- we consider elements from some other algebraic structure. This approach has its roots in the works of Blum, Shub, and Smale, whose model of the \emph{BSS-machine} is able to describe computations over real numbers. Following this, Blum, Shub, Smale, and Cucker~\cite{ComplexityRealComp} also give a generalization to algebraic circuits over the reals.

\subsection{Our Contribution}
In this article, we provide logical characterizations of circuit complexity classes over integral domains. In particular, we define natural extensions of the classical circuit complexity hierarchies $\ACRing{i}$ and $\NCRing{i}$ over arbitrary integral domains. The resulting classes are denoted as $\ACRing[R]{i}$ and $\NCRing[R]{i}$, respectively. We adapt the framework of metafinite model theory to define various extensions of first order logic, which capture these new complexity classes.

We establish a Immerman-style theorem stating that $\FORing{R} = \ACRing[R]{0}$ and provide a framework to establish complexity-theoretic comparisons of classes with different underlying integral domains and give examples over which integral domain the $\ACRing{0}$-classes are equal.

We adapt the $\GPR$-operator, which Durand, Haak and Vollmer use to provide logical characterizations of $\ACRing{1}$ and $\NCRing{1}$~\cite{10.1145/3209108.3209179} to logics over metafinite structures and extend it to be able to characterize the whole $\ACRing[R]{}$ and $\NCRing[R]{}$-hierarchies.

Finally, we define a formalism suitable for comparing complexity classes with different underlying integral domains and we show that a hierarchy of sets of complexity classes emerges, such that each set is able to ``simulate'' the complexity classes from the sets lower in the hierarchy.


\subsection{Related Work}
Another model of computation that is commonly known under the name \enquote{algebraic circuit} are Valiant circuits~\cite{Valiant1979}, which are the foundational model in the emerging subfields of algebraic and geometric complexity theory. They differ from the model we analyze in this work in the way that we use $<$-gates, which are not available in the Valiant setting. This difference is of complexity-theoretical significance since for our model, we have that $\NCRing[\R]{i} \subsetneq \NCRing[\R]{i + 1}$~\cite{DBLP:journals/jc/Cucker92} but we have that $\mathrm{VNC^2} =\mathrm{VNC^3} =\ldots = \mathrm{VP}$~\cite{buergisser} in the Valiant setting for every field, in particular over the reals.

\subsection{Outline of the Paper}
We start with an overview of some central concepts from algebra and circuit complexity, which are needed for the definition of our model. We then establish our model of algebraic circuits for arbitrary integral domains and the analogous complexity classes induced by them in analogy to the Boolean case. Afterwards, we give a logical characterization of the presented circuit classes inspired by classical descriptive complexity. However, since our models now have an infinite component, we build on the foundations of \emph{metafinite model theory}.

We then go on to define first-order logic over $R$ and show that, like in the classical case, we have that $\ACRing[R]{0} = \FORing[\Arb{R}]{R}+\mathrm{SUM}_{R} + \mathrm{PROD}_{R}$.

In Section~\ref{section:gfr}, we give logical characterizations of $\ACRing{i}$ and $\NCRing{i}$. The tool of our choice is an adaptation of the guarded predicative logic of~\cite{10.1145/3209108.3209179}.

In Section~\ref{section:relationship-regarding-rings}, we discuss connections between $\ACRing[R]{0}$-classes over different integral domains.

\begin{acknowledgements}
	We thank Sebastian Berndt and  Anselm Haak for fruitful discussions. 
\end{acknowledgements}

\section{Preliminaries}\label{section:preliminaries}

First, we discuss the theoretical background of this paper. We give the basic definitions and remarks on the notation used in this paper. 

\begin{notation}
    In this paper, we will generally use overlined letters (e.\,g. $\ol{x}$) to denote tuples.
\end{notation}

As the name suggests, algebraic circuit classes make use of concepts originating from abstract algebra. We will first give an overview of some fundamental concepts used in this paper. For further information on the topic, the reader may refer to the books by Aluffi~\cite{aluffi2009algebra} or Lang~\cite{Lang_2002}.

\begin{definition}[Ring] \label{definition:ring}
    A \emph{ring with unit} $(R, +, \times)$ is a tuple consisting of a set $R$ and two binary operations $+$ and $\times$ such that
    \begin{itemize}
        \item $1 \in R$
        \item $(R,+)$ is an abelian group, i.\,e.
        \begin{itemize}
 	        \item For every $a,b,c \in R$, we have $(a+b)+c = a+(b+c)$
 	        \item For every $a,b \in R$, we have $a+b=b+a$
 	        \item There is a $0 \in R$, we have $a+0=a$
 	        \item For every $a \in R $, there exists an inverse $-a \in R: a+(-a) = 0 $
        \end{itemize}
        \item $(R,\times)$ is a monoid, i.\,e.
        \begin{itemize}
 	        \item $(a\times b) \times c = a\times (b \times c)$
 	        \item There is a $1 \in R$ such that $a\times 1 = 1\times a = a$
        \end{itemize}
        \item multiplication is distributive with respect to addition:
        \begin{align*}
            a \times (b+c) &= a\times b + a \times c, \text{ for all } a,b,c \in R\\
            (a+b)\times c &= a\times c + b \times c, \text{ for all } a,b,c \in R
        \end{align*}
    \end{itemize}
    A ring is called \emph{commutative}, if $a\times b = b \times a$, for all $a, b\in R$.
\end{definition}

In the following a ring is always assumed to have a unit.

\begin{definition}[Integral domain]
    A ring $R$ with unit is called integral domain, if it is a commutative ring and if there is no element $a \in R$ which is a zero-divisor, i.\,e. there is no $a\in R\setminus \{0\}$ for which there is a $b \in R \setminus \{0\}$ with $a\times b = 0$. 
\end{definition}

Throughout the paper, whenever not otherwise specified, we use $R$ to denote an infinite integral domain.

\begin{remark}
    In particular we require that for every $r \in R$ the equations
    \begin{align*}
        r \times 0 &= 0 \\
        0 \times r &= 0
    \end{align*}
    hold.
\end{remark}

\begin{definition}[Polynomial]
    Let $R$ be a ring. A \emph{polynomial} $f(x)$ in the \emph{indeterminate} $x$ and with \emph{coefficients} in $R$ is a \emph{finite} linear combination of nonnegative exponents of $x$ with coefficients in $R$:
    \begin{align*}
        f(x) = \sum_{i\ge 0} a_ix^i = a_0 + a_1x + a_2x^2 + \cdots,
    \end{align*}
    where all $a_i$ are elements of $R$ (the \emph{coefficients}) and we require $a_i = 0$ for infinitely many $i$. Two polynomials are taken to be equal if all the coefficients are equal:
    \begin{align*}
        \sum_{i\ge 0}a_ix^i = \sum_{i\ge 0}b_ix^i \iff (\forall i \ge 0)\colon a_i = b_i.
    \end{align*}
    The set of polynomials in $x$ over $R$ is denoted by $R[x]$.
\end{definition}

\begin{definition}[Polynomial ring]\label{def:polynomial}
    Let $f(x) = \sum_{i\ge 0} a_ix^i$ and $g(x) = \sum_{i\ge 0} b_ix^i$ be two polynomials. A \emph{polynomial ring} is the set $R[x]$ together with the following two operations:
    \begin{align*}
        f(x) + g(x) \coloneqq \sum_{i\ge 0} (a_i + b_i)x^i
    \end{align*}
    and
    \begin{align*}
        f(x) \times g(x) \coloneqq \sum_{k\ge 0} \sum_{\substack{0\le i\le k,\\0\le j\le k,\\i + j = k}} a_ib_j x^{i+j}
    \end{align*}
    The unit of this ring is $1_{R[x]} = 1_R + 0x + 0x^2 +\cdots$.
\end{definition}

\begin{remark}
    A polynomial ring like in Definition~\ref{def:polynomial} is also a ring with unit. Analogous to Definition~\ref{def:polynomial}, we can a define polyomial ring $S$ with finitely many variables $x_1,\ldots,x_n$ by applying Definition~\ref{def:polynomial} inductively:
    \begin{align*}
        S \coloneqq R[x_1,\ldots,x_n] \coloneqq (R[x_1,\ldots,x_{n-1}])[x_n],
    \end{align*}
    where $R$ is a ring. Note that polynomial rings over fields are integral domains.
\end{remark}

\begin{definition}[Adjoint elements to a ring]
    When \emph{adjoining} a number $f \not\in R$ to a ring $R$ we obtain the set 
    \begin{align*}
        R[f] \coloneqq \{a + f\cdot b \mid a,b \in R \}.
    \end{align*}
    The set $R[f]$ together with the following two operations
    \begin{align*}
        (a + f\cdot b) + (c + f\cdot d)  \coloneqq (a+c) + f\cdot(b+d)
    \end{align*}
    and
    \begin{align*}
        (a + f\cdot b) \cdot (c + f\cdot d)  \coloneqq ac + f(ad+bc) + f^2(bd),
    \end{align*} where $+$ and $\cdot$ are the binary operations on the ring, is again a ring, where $1_{R[x]} = 1_R + 0f$ is the unit of this ring.
\end{definition}

\begin{example}
    Popular examples of rings include $\Z,\Q,\R$ and $\C$, as well as sets with \emph{adjoined} elements like $\Z[i], \Z[\sqrt[p]{-1}], \Z[\sqrt{p},\sqrt{q},\ldots]$, $\R[i_{k}], \Z[i_{k}], \ldots$, where $k\in \mathbb{N}$, $p \neq q$ are primes and $i_{k}$ denotes the $k$th root of $-1$, i.\,e. $i_{k}^k = -1$.
    Alternatively, we can adjoin $\ell$ algebraic independent prime root elements with $2^\ell = k$ and get an analogous construction.
\end{example}

\begin{remark}
    The denotation of $\R[i_{k}]$ resp. $\Z[i_{k}]$ are not unique but the constructions of the circuit over the underlying rings are analogous except for the placeholders for the adjoined numbers. 
    For example $\Z^4$ can denote $\Z[i_4]= \Z[i_4,i_4^2,i_4^3]$ or $\Z[\sqrt{2},\sqrt{5}]= \Z[\sqrt{2},\sqrt{5},\sqrt{10}]$ or many other rings. 
    Since we only focus on the structure of the tuples, i.\,e., the coefficients of adjoint elements, or later the underlying circuits, our short notation for $\Z^k$ for arbitrary $k>1$ is not unique and may differ, e.\,g. $\Z^2$ may denote $\Z[i]$ or may denote $\Q$. 
    In the context of the placeholder notation, the arithmetic of the specific ring must be clear, if it is important. 
\end{remark}

\begin{definition}[Field]
    A field $(F,+,\times)$ is a tuple consisting of a set $F$ and two binary operations $+$ and $\times$ such that
    \begin{itemize}
        \item $1 \in F$
        \item $0 \in F$
        \item $(F,+)$ is an abelian group (see \ref{definition:ring})
        \item $(F,\times)$ is an abelian group (see \ref{definition:ring})
        \item multiplication is distributive with respect to addition:
        \begin{align*}
            a \times (b+c) = a\times b + a \times c, \text{ for all } a,b,c \in F
        \end{align*}
    \end{itemize}
    We do not need two equations for distributivity since multiplication and addition are both commutative.
\end{definition}

\begin{remark}
    An alternative characterisation for a field is as a commutative ring where $0\neq 1$ holds and every element $a\neq 0 \in F$ is invertible.
\end{remark}

We need some \emph{ordering} $<$ on our integral domain $R$. 
In some cases, like $\R$ and $\Z$, we have some natural ordering that we want to use. 
In other cases, like $\C$, we have to construct some ordering. 
This ordering does not have to be closed under multiplication, we only have to distinguish different numbers from each other. 
So an ordering on tuples ($z_1=a_1+b_1i, z_2=a_2+b_2i \in \C: (a_1,b_1) <_\C (a_2,b_2) \iff a_1 <_\R a_2 \text{ or }a_1 = a_2  \text{ and } b_1 <_\R b_2$) is possible. \smallskip

\begin{definition}
    A \emph{strict total order} on a set $S$ is a binary relation $<$ on $S$ which is irreflexive, transitive and total.
\end{definition}

\begin{example}
    For some field $\mathbb{F}_{p^k}$ for some prime $p$ and a natural number $k$, we write the finitely many elements in a list and then use the lexicographical on this list. For example, let $R=\Z_3$. Then we define $<_{\Z_3}$ as $0<1<2$.
    
    There are multiple possibilities for such a $<$ over the field of complex numbers. Let $z = a+bi \in \C$ and let $<_1 $ be the lexicographic order on pairs $(\sqrt{a^2+b^2},a)$. Furthermore, let $<_2$ be the lexicographic order on pairs of the form $(a,b)$. Then both variants are possible since both distinguish different complex numbers.
\end{example}

\begin{definition}\label{definition:sign-from-order}
    A strict total order $<$ over a ring $R$ induces a $\sign$-function as follows:
    \[
        \sign_{(R,<)}(x) = 
        \begin{cases}
            1 & \text{if $x > 0$}\\
            0 & \text{otherwise.}
        \end{cases}\qedhere
    \]
\end{definition}

In the following, unless explicitly otherwise specified, the symbol $R$ denotes an infinite integral domain with a strict total order $<$ on $R$. Most of the rings we consider have a natural ordering. In this case, we omit the ordering symbol.

\subsection{Algebraic Circuits over \texorpdfstring{$R$}{\textit{R}}}

As this work is a generalization of the well established Boolean circuits, some background in circuit complexity is assumed. 
Standard literature which introduces this topic is the book by Vollmer~\cite{DBLP:books/daglib/0097931}. 
The generalization to \emph{algebraic circuits} over real numbers were first introduced by Cucker~\cite{DBLP:journals/jc/Cucker92}. 
In analogy to them, Barlag and Vollmer defined an unbounded fan-in version of algebraic circuits~\cite{DBLP:conf/wollic/BarlagV21}. \smallskip

\begin{definition}
    We define an \emph{algebraic circuit $C$ over an integral domain $R$ with a strict total order $<$}, or $R$-circuit for short, as a directed acyclic graph. It has gates of the following types:
    \begin{description}
        \item[Input nodes] have indegree $0$ and contain the respective input values of the circuit.
        \item[Constant nodes] have indegree $0$ and are labelled with elements of $R$
        \item[Arithmetic nodes] have an arbitrary indegree $\ge 1$, bounded by the number of nodes in the circuit and are labelled with either $+$ or $\times$.
        \item[Comparison ($<$) nodes] have indegree $2$.
        \item[Output nodes] have indegree $1$ and contain the output value after the computation of the circuit.
    \end{description}

    Nodes cannot be predecessors of the same node more than once, thus the outdegree of nodes in these algebraic circuits is bounded by the number of gates in the circuit.

    During the computation of an algebraic circuit, the arithmetic gates compute their respective functions with the values of their predecessor gates being taken as the function arguments and the comparison gates compute the characteristic function of $<$ in the same way. The values of the output gates at the end of the computation are the result of the computation.
\end{definition}

In contrast to the classical setting, where we consider words over an alphabet $\Sigma$ as inputs, and where languages are thus defined to be subsets of the Kleene closure of the alphabet (in symbols, $L\subseteq \Sigma^*$), we consider vectors of integral domain elements as input. In analogy to $\Sigma^*$, we denote for an integral domain $R$:
\begin{align*}
    R^* = \bigcup_{k \in \N_0}R^k
\end{align*}
With $\abs{x}$, we denote the length of $x$,  i.\,e., if $x \in R^k$ then $\abs{x} = k$.

\begin{remark}
    We will use the term \emph{algebraic circuit} to describe circuits in the sense of Blum, Cucker, Shub and Smale~\cite{ComplexityRealComp} rather than \emph{arithmetic circuits} as e.\,g. in~\cite{DBLP:conf/wollic/BarlagV21} to distinguish them from arithmetic circuits in the sense of Valiant, see e.\,g.~\cite{AlgebraicComplTheory}. Valiant circuits are essentially algebraic circuits without $\sign$-gates~\cite[page 350]{ComplexityRealComp}. 
\end{remark}

\begin{remark}
    In the special case $R = \Z_2$, the definition above yields exactly the Boolean circuits.
\end{remark}

\begin{definition}\label{definition:size-and-depth}
    We call the number of gates in a circuit the \emph{size} of the circuit and the longest path from an input gate to an output gate the \emph{depth} of the circuit.
\end{definition}

\begin{remark}
    Unlike the way algebraic circuits with unbounded fan-in gates were introduced previously~\cite{DBLP:conf/wollic/BarlagV21}, algebraic circuits in this context have comparison gates instead of $\sign$-gates. This stems from the fact that when dealing with real or complex numbers, we can construct a $\sign$-function from $<$ via Definition~\ref{definition:sign-from-order} and the order relation from the $\sign$-function via
    \begin{align*}
        x<y \iff \sign(\sign(y-x) \cdot (2 + \sign(x-y))) = 1.
    \end{align*}
    
    If we now consider finite integral domains, however, suddenly it becomes less clear how to construct the $<$ relation from the $\sign$ function, while the other way around still works by Definition~\ref{definition:sign-from-order}.

    Given that we define circuits and logic fragments relative to an ordering, it is natural for both to have access to this ordering. Therefore we choose to use $<$ gates rather than $\sign$ gates. So in the following, all usages of $\sign$ are implicit uses of the order relation as by Definition~\ref{definition:sign-from-order}.
    
    In the cases which are considered in other literature ($\mathbb R$ or $\mathbb F_2$), this has no complexity-theoretic impact, since in the emulation of one formalism using the other, we get a linear overhead in the size and constant overhead in the depth of the circuit.
\end{remark}

In order to define complexity classes with respect to algebraic circuits, we have to define the function calculated by such a circuit and define the term of circuit families.

\begin{definition}
    The ($n$-ary) function $f_C\colon R^n \to R^m$ computed by an $R$-circuit $C$ (with $n$ input gates and $m$ output gates) is defined by
    \begin{align*}
        f_C(x_1, \dots, x_n) = (y_1, ..., y_m),    
    \end{align*}
    where $y_1, \dots, y_m$ are the values of the output gates of $C$, when given $x_1, ..., x_n$ as its inputs.
\end{definition}

\begin{definition}
    A \emph{family of $R$-circuits $\cC = (C_n)_{n \in \N}$} is a sequence of circuits which contains one circuit for every input length $n \in \N$. The function $f_\cC\colon R^*\to R^* $ computed by a circuit family $\cC$ is defined by
    \begin{align*}
        f_\cC(x) = f_{C_{\abs{x}}}(x)
    \end{align*}
    The size (resp. depth) of a circuit family $(C_n)_{n \in \N}$ is defined as a function mapping $n \in \N$ to the size (resp. depth) of $C_n$.
\end{definition}

Analogously to the classical case, we say that a set $S\subseteq R^*$ can be \emph{decided} by a circuit family $\mathcal C$, if $\mathcal C$ can compute the characteristic function of $S$. 

\begin{definition}\label{definition:AC_R}
    The class $\ACRing[R]{i}$ is the class of sets decidable by $R$-circuit families of size $\bO(n^{\bO(1)})$ and depth $\bO((\log_2 n)^i)$.
\end{definition}

\begin{definition}\label{definition:NC_R}
    The class $\NCRing[R]{i}$ is the class of sets decidable by $R$-circuit families of size $\bO(n^{\bO(1)})$ and depth $\bO((\log_2 n)^i)$, where each arithmetic gate has an indegree bounded by $2$ (we call this \emph{bounded fan-in}).
\end{definition}




\begin{remark}
    The circuit families we have just introduced do not have any restrictions on the difficulty of constructing any individual circuit given the input length.
    If it is important to know how hard obtaining a particular circuit is, one can consider so-called uniform circuit families.
    These families require their circuits to meet restrictions on the difficulties of obtaining them. 
    For more information on uniformity, cf.~\cite{DBLP:books/daglib/0097931}.
    
    Uniformity criteria can be defined for algebraic circuit classes in a similar way.
    See e.\,g.~\cite[Section 18.5]{ComplexityRealComp}.
\end{remark}

\subsection{Structures and first-order logic over integral domains}


As we want to characterize circuit complexity classes with logical fragments, this work falls broadly under the umbrella of finite model theory, and, in particular, descriptive complexity. 
Foundational knowledge of these topics is assumed and can be found in the books by Grädel et al., Libkin and Immerman~\cite{DBLP:series/txtcs/GradelKLMSVVW07, DBLP:books/sp/Libkin04, DBLP:books/daglib/0095988}. 
Traditionally, descriptive complexity is viewed as a subdiscipline of finite model theory, since allowing infinite structures often makes problems undecidable. 
In our setting, however, we want to (carefully) reintroduce infinite structures to our reasoning. For this, we use \emph{metafinite model theory}, an extension of the finite model formalism first introduced by Grädel and Gurevich~\cite{DBLP:journals/iandc/GradelG98}. 
A short introduction to metafinite model theory is also featured in the book by Grädel et al.~\cite[page 210]{DBLP:series/txtcs/GradelKLMSVVW07}. 
The approach was taken by Grädel and Meer~\cite{DBLP:conf/stoc/GradelM95} to describe some essential complexity classes over real numbers. 
These descriptions were later extended by Cucker and Meer~\cite{DBLP:journals/jsyml/CuckerM99}, where, among other things, the \NC-hierarchy over real numbers was defined. 
For the characterization, these papers introduce a so-called first-order logic with arithmetics, which we will adapt to be used in our setting.

To make proofs easier, we use the well known trick that a signature consisting only of functions can emulate any predicates we might want to define via a characteristic function corresponding to the relation the predicate is interpreted as. We can furthermore emulate constants by $0$-ary functions.

\begin{definition}\label{definition:R-structure}
    Let $L_s$, $L_f$ be finite vocabularies which only contain function symbols. An \emph{$R$-structure of signature $\sigma = (L_s, L_f)$} is a pair $\struc{D} = (\struc{A}, F)$ where 
    \begin{enumerate}
        \item \struc{A} is a finite structure of vocabulary $L_s$ which we call the \emph{skeleton} of \struc{D} whose universe $A$ we will refer to as the \emph{universe} of \struc{D} and whose cardinality we will refer to by $\abs{A}$ 
        \item and $F$ is a finite set which contains functions of the form $X \colon A^k \rightarrow R$ for $k \in \N$ which interpret the function symbols in $L_f$.
    \end{enumerate}

    We will use \emph{$\structR(\sigma)$} to refer to the set of all $R$-structures of signature $\sigma$ and we will assume that for any signature $\sigma = (L_s, L_f)$, the symbols in $L_s$ and $L_f$ are ordered.
\end{definition}

\begin{remark}
    In this paper, we only consider \emph{ranked structures}, i.\,e., structures, in which the skeleton is ordered.
\end{remark}


\begin{definition}[First-order logic over $R$]
    The \emph{language of first-order logic over an integral domain $R$} contains for each signature $\sigma = (L_s, L_f)$ a set of formulae and terms. The terms are divided into \emph{index terms} which take values in universe of the skeleton and \emph{number terms} which take values in $R$. These terms are inductively defined as follows:
    \begin{enumerate}
        \item The set of index terms is defined as the closure of the set of variables $V$ under applications of the function symbols of $L_s$.
        \item Any element of $R$ is a number term. 
        \item For index terms $h_1, ..., h_k$ and a $k$-ary function symbol $X \in L_f$, $X(h_1, ..., h_k)$ is a number term.
        \item If $t_1$, $t_2$ are number terms, then so are $t_1 + t_2$, $t_1 \times t_2$ and $\sign(t_1)$.
    \end{enumerate}


    Atomic formulae are equalities of index terms $h_1 = h_2$ and number terms $t_1 = t_2$, inequalities of number terms $t_1 < t_2$ and expressions of the form $P(h_1, ..., h_k)$, where $P \in L_s$ is a k-ary predicate symbol and $h_1, .., h_k$ are index terms.

    The set $\FORing[]{R}$ is the smallest set which contains the closure of atomic formulae under the logical connectives $\{\land, \lor, \neg, \to, \leftrightarrow\}$ and quantification $\exists v \psi$ and $\forall v \psi$ where $v$ ranges over \struc{A}. 
\end{definition}

For better readabilty, we will use inequalities of the form $x \leq y$ and the extensions of equalities $\ol{x} = \ol{y}$ and the inequalities $\ol{x} < \ol{y}$ and $\ol{x} \leq \ol{y}$ to tuples throughout this paper. 
Note that these extensions are easily definable from $=$ and $<$ in first-order logic.

\begin{remark}
    We call any element of $R$ a number term even though some integral domains can contain elements like $\zeta$, $\aleph$ or $\circledast$ which are not numbers. Since we want to do some calculations with these elements, we call them numbers, too.
\end{remark}


Equivalence of $\FORing[]{R}$-formulae and sets defined by $\FORing[]{R}$-formulae are done in the usual way, i.\,e., a formula $\varphi$ defines a set $S$ if and only if the elements of $S$ are exactly the encodings of $R$-structures under which $\varphi$ holds and two such formulae are said to be equivalent if and only if they define the same set.



With the goal in mind to create a logic which can define sets decided by circuits with unbounded {fan-in}, we introduce new rules for building number terms: the \textit{sum} and the \textit{product rule}. 
We will also define a further rule, which we call the \textit{maximization rule}.
This one is, however, already definable in $\FORing{R}$ and we thus do not gain any expressive power by using it.
We will use it to show that we can represent characteristic functions in $\FORing{R}$. 

\begin{definition}[Sum, product and maximization rule]\label{definition:sum_prod_max_rule}
    Let $t$ be a number term in which the variables $\ol{x}$ and the variables $\ol{w}$ occur freely and let $A$ denote the universe of the given input structure. Then 
    \begin{equation*}
        \sumi[\ol{x}](t(\ol{x}, \ol{w}))
    \end{equation*}
    is also a number term which is interpreted as $\sum_{\ol{x} \in A^{\abs{\ol{x}}}}t(\ol{x}, \ol{w})$. 
    The number terms $\prodi[\ol{x}](t(\ol{x}, \ol{w}))$ and $\maxi[\ol{x}](t(\ol{x}, \ol{w}))$ are defined analogously.

    We call these operators \emph{aggregators} and for any formula $\varphi$ containing aggregators of the above form, the variables in $\ol{x}$ are considered bound in $\varphi$.
\end{definition}

\begin{example}
    Let $\sigma = (\{\}, f_E^2)$ be the signature of weighted graphs, i.\,e. $f_E(x, y)$ gives the weight of the edge from $x$ to $y$ or $0$ if there is none.
    Let $\struc{G}$ be a (graph) structure over $\sigma$.
    Then the following $\FORing{R} + \SUM$ sentence $\varphi$ states that there is a node in the skeleton of $\struc{G}$, for which the sum of its outgoing edges is more than double the sum of the outgoing edges of any other node.
    \[
        \varphi \coloneqq \exists x \forall y (x \neq y \to \sumi[a](f_E(x, a)) > 2 \times \sumi[b](f_E(y, b)))
    \]
\end{example}

\begin{observation}
    $\FORing{R} = \FORing{R} + \MAX$.
\end{observation}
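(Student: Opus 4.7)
The plan is to show $\FORing{R} + \MAX \subseteq \FORing{R}$ (the reverse inclusion is immediate) by induction on the number of $\maxi$-operators appearing in a formula. The base case is vacuous; for the step I would pick, in a formula $\varphi$ containing at least one $\maxi$, an \emph{innermost} occurrence: a subterm $\tau = \maxi[\ol{x}]\, s(\ol{x}, \ol{w})$ such that $s$ itself contains no further aggregator.

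Since $\maxi$ produces a number term, this $\tau$ sits inside some smallest enclosing atomic subformula $\alpha$ of $\varphi$, which must be of the form $t_1 = t_2$ or $t_1 < t_2$. Choosing a tuple $\ol{x}_0$ of fresh variables of the same length as $\ol{x}$, disjoint from every variable appearing in $\varphi$, I would replace $\alpha$ in $\varphi$ by
\[
    \exists \ol{x}_0 \Big(\, \forall \ol{x}\ \big(s(\ol{x}, \ol{w}) \leq s(\ol{x}_0, \ol{w})\big) \ \land\ \alpha\big[\tau \,/\, s(\ol{x}_0, \ol{w})\big] \,\Big),
\]
where $\alpha[\tau/s(\ol{x}_0, \ol{w})]$ denotes $\alpha$ with every occurrence of $\tau$ replaced by $s(\ol{x}_0, \ol{w})$. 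Correctness rests on two facts already established in the preliminaries: the universe $A$ is finite, and $<$ is a strict total order on $R$. Together these guarantee that, for any fixed interpretation of the free variables $\ol{w}$, the finite set $\{s(\ol{x}, \ol{w}) : \ol{x} \in A^{|\ol{x}|}\}$ has a unique maximum, attained precisely at those $\ol{x}_0$ witnessing the universal condition. Hence the rewritten formula is semantically equivalent to $\varphi$, and because $s$ contains no $\maxi$, the substitution introduces none either, so the count strictly decreases. Iterating the rewrite yields an equivalent $\FORing{R}$-formula.

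The only real obstacle is bookkeeping around variable capture. The free variables $\ol{w}$ of $\tau$ may be bound by quantifiers of $\varphi$ lying above $\alpha$, so the rewrite must be performed \emph{in situ} at the position of $\alpha$ within $\varphi$, with $\ol{x}_0$ chosen distinct from all other variables in scope. Provided this care is taken, the induction goes through and gives $\FORing{R} + \MAX \subseteq \FORing{R}$, establishing the claimed equality.
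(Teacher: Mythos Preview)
Your approach matches the paper's one-line sketch---eliminate each $\maxi$ by an $\exists\forall$ prefix that names a maximizer and substitutes its value---and the write-up is largely sound, but the combination of ``innermost occurrence'' with ``rewrite at the enclosing atomic formula'' breaks when $\maxi$-operators are nested inside one another. Consider a subterm $\maxi[\ol{y}]\bigl(\, r(\ol{y}) + \maxi[\ol{x}]\, s(\ol{x}, \ol{y}) \,\bigr)$ sitting inside an atom $\alpha$. Your innermost choice is $\tau = \maxi[\ol{x}]\, s(\ol{x}, \ol{y})$, with $\ol{y}$ among its free variables $\ol{w}$; but $\ol{y}$ is bound by the outer $\maxi[\ol{y}]$, which lies strictly \emph{inside} $\alpha$, not above it. When you replace $\alpha$ by $\exists \ol{x}_0\bigl(\forall \ol{x}\,(s(\ol{x}, \ol{y}) \leq s(\ol{x}_0, \ol{y})) \land \alpha[\tau / s(\ol{x}_0, \ol{y})]\bigr)$, the occurrences of $\ol{y}$ in the first conjunct are no longer in the scope of $\maxi[\ol{y}]$ and become free; the rewritten formula is not equivalent to the original.

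The repair is straightforward: process an \emph{outermost} $\maxi$ instead, so that every free variable of $\tau$ is bound (if at all) only by genuine first-order quantifiers above $\alpha$; then your rewrite goes through verbatim. Because the body $s$ may now itself contain $\maxi$-operators, the raw count of aggregators can increase under the rewrite, so replace the induction on count by induction on the maximal nesting depth of $\maxi$-operators (or lexicographically on depth and number of occurrences at that depth). Each outermost elimination strictly reduces this measure, and the argument terminates.
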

\begin{proof}
    An occurrence of $\maxi(F(i))$ essentially assures that there exists an element $x$, such that for all elements $y$, $F(x) \geq F(y)$ and takes the value of $F(x)$.
    Clearly, this can be defined in fist order logic by a formula of the form $\forall x \exists y F(x) \geq F(y)$.
\end{proof}

Furthermore, any characteristic function of a logical formula can be described in $\FORing{R}$. The proof for this runs analogously to that of Cucker and Meer~\cite[Proposition 2]{DBLP:journals/jsyml/CuckerM99}, since this proof does not make any use of special properties of the reals.

\section{\texorpdfstring{$\ACRing[R]{0} = \FORing[]{R}$}{AC\textasciicircum{0}\_{R} = FO\_{R}} }\label{section:AC^0_R-equals-FO_R}

In this section, we present a proof that $\FORing[]{R}$ captures the class $\ACRing[R]{0}$. 
The proof idea is similar to the proof of the established result by Immerman which characterizes $\ACRing[]{0}$ via $\FO$. 

When using logics to check, whether a given $R$-tuple would be accepted by a $R$-circuit, one needs to think about how this $R$-tuple would be interpreted as an $R$-structure $\struc{A}$. 
This can be done by interpreting it as the set of the circuit's input gates along with a single unary function $f_\mathrm{element} \colon A \to R$ mapping the $i$th input gate to the $i$th input of the circuit.
We call this kind of structure a \emph{circuit structure}.

In the following, we would like to extend $\FORing[]{R}$ by additional functions and relations that are not given in the input structure. 
To that end, we make a small addition to Definition~\ref{definition:R-structure} where we defined $R$-structures. 
Whenever we talk about $R$-structures over a signature $(L_s, L_f)$, we now also consider structures over signatures of the form $(L_s, L_f, L_a)$.
The additional (also ordered) vocabulary $L_a$ does not have any effect on the $R$-structure, but it contains function symbols, which can be used in a logical formula with this signature. 
This means that any $R$-structure of signature $(L_s, L_f)$ is also an $R$-structure of signature $(L_s, L_f, L_a)$  for any vocabulary $L_a$. 
The symbols in $L_a$ stand for functions that we will use to extend the expressive power of $\FORing[]{R}$ to capture various complexity classes.

\begin{definition}
    Let $F$ be a set of finite  functions. We will write $\FORing[F]{R}$ to denote the class of sets that can be defined by $\FORing[]{R}$-sentences which can make use of the functions in $F$ in addition to what they are given in their structure.
\end{definition}

Formally, this means that $\FORing[F]{R}$ describes exactly those sets $S \subseteq R^*$ for which there exists a $\FORing[]{R}$-sentence $\varphi$ over a signature ${\sigma = (L_s, L_f, L_a)}$ such that for each length $n$, there is an interpretation $I_n$ interpreting the symbols in $L_a$ as elements of $F$ such that for all $R^*$-tuples $s$ of length $n$ it holds that $s \in S$ if and only if $s$ encodes an $R$-structure over $(L_s, L_f, L_a)$ which models $\varphi$ when using $I_n$.

\begin{definition}
    We write \Arb{R} to denote the set of all functions $f \colon R^k \to R$, where $k \in \N$.
\end{definition}

\begin{theorem}\label{theorem:FOArb}
    Let R be an infinite integral domain. Then $\ACRing[R]{0} = \FORing[\Arb{R}]{R} + \SUM + \PROD$.
    
\end{theorem}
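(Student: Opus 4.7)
\medskip

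The plan is to prove the two inclusions separately, following the general pattern of the Cucker--Meer proof for the reals \cite{DBLP:journals/jsyml/CuckerM99} and the classical Immerman theorem $\mathrm{AC}^0 = \mathrm{FO}[\mathrm{Arb}]$, while being careful that the argument only uses properties of a generic infinite integral domain $R$ with a strict total order.

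\textbf{Inclusion $\FORing[\Arb{R}]{R} + \SUM + \PROD \subseteq \ACRing[R]{0}$.} I would proceed by structural induction on a sentence $\varphi$ of signature $\sigma=(L_s,L_f,L_a)$ to build, for each input length $n$, a polynomial-size constant-depth $R$-circuit $C_n$ whose value equals the truth value of $\varphi$ on the encoded $R$-structure. Index variables range over the universe of size $n$, so a $k$-tuple of index variables takes $n^k$ values; each subformula/term yields one gate for every tuple of free index variables, giving polynomial size. Concretely: atomic equalities and inequalities become comparison/equality gadgets built from a constant number of $+$, $\times$ and $<$ gates; Boolean connectives on $0/1$-valued circuits become simple arithmetic gates; quantifiers $\exists \ol{x}\,\psi$ and $\forall \ol{x}\,\psi$ become unbounded fan-in OR/AND, which are simulated by a $+$ (followed by a $<$ against~$0$) and a $\times$ (over the $0/1$ values of $\psi$), respectively; the aggregators $\sumi[\ol{x}]$ and $\prodi[\ol{x}]$ become unbounded fan-in $+$ and $\times$ gates of fan-in $n^{|\ol x|}$; function symbols from $L_f$ are evaluated by looking up the appropriate input gate. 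The symbols in $L_a$ are implemented \emph{non-uniformly}: for each $n$ and each $k$-ary symbol in $L_a$, we hardwire the $n^k$ values of the corresponding function in $\Arb{R}$ as constant gates, selected at evaluation time by an equality-indicator network. Since $\varphi$ has a fixed finite structure, the induction only increases the depth by a constant, so the overall depth is bounded.

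\textbf{Inclusion $\ACRing[R]{0} \subseteq \FORing[\Arb{R}]{R} + \SUM + \PROD$.} Let $(C_n)_{n\in\N}$ be an $R$-circuit family of size $p(n) \le n^c$ and constant depth $d$. I would identify each gate of $C_n$ with a $c$-tuple over the universe $A$ of size $n$. Encode the structure of $C_n$ by means of auxiliary symbols in $L_a$, interpreted for each $n$ by functions in $\Arb{R}$: indicator functions $\mathrm{In}, \mathrm{Const}, \mathrm{Plus}, \mathrm{Times}, \mathrm{Less}, \mathrm{Out}$ for gate types; an edge-indicator $E(\ol g,\ol h)$ that is $1$ iff $\ol h$ is a predecessor of $\ol g$ (and $0$ otherwise); a function $\mathrm{Val}_{\mathrm{c}}(\ol g)$ returning the constant attached to $\ol g$ when it is a constant gate; and a function $\mathrm{Inp}(\ol g)$ returning the index of the input gate. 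Now define number terms $V_0(\ol g), V_1(\ol g),\dots,V_d(\ol g)$ by induction on depth: $V_0(\ol g)$ reads $f_{\mathrm{element}}(\mathrm{Inp}(\ol g))$ for input gates, $\mathrm{Val}_{\mathrm{c}}(\ol g)$ for constant gates. For the step, set
\[
V_{k+1}(\ol g) \;=\; \mathrm{Plus}(\ol g)\cdot S(\ol g) \;+\; \mathrm{Times}(\ol g)\cdot P(\ol g) \;+\; \mathrm{Less}(\ol g)\cdot L(\ol g) \;+\; (\mathrm{In}(\ol g)+\mathrm{Const}(\ol g))\cdot V_k(\ol g),
\]
where $S(\ol g)=\sumi[\ol h]\bigl(E(\ol g,\ol h)\cdot V_k(\ol h)\bigr)$ handles the unbounded $+$-gates, and the product gate is handled by the key trick $P(\ol g)=\prodi[\ol h]\bigl(E(\ol g,\ol h)\cdot V_k(\ol h) + (1-E(\ol g,\ol h))\cdot 1\bigr)$, so that non-predecessors contribute the multiplicative identity. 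The comparison gates are handled by $L(\ol g)=\sign(V_k(\ol h_2)-V_k(\ol h_1))$, where $\ol h_1,\ol h_2$ are selected by existential quantification combined with $E$. The sentence finally asserts $V_d(\ol g_{\mathrm{out}})>0$, where $\ol g_{\mathrm{out}}$ is selected via $\mathrm{Out}$.

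\textbf{Main obstacle and comments.} The chief technical point is simulating unbounded fan-in multiplication restricted to the actual predecessors of a gate, since $\prodi$ ranges over \emph{all} tuples. The indicator trick $E\cdot V + (1-E)\cdot 1$ works precisely because $R$ is a commutative ring with unit, so neither the sign-based argument from the reals nor a Boolean-style trick is needed. A secondary subtlety is that in the second inclusion, the depth of the formula must match the depth of the circuit: because $d$ is constant, we only need constantly many nested number terms $V_0,\dots,V_d$, which is fine. Uniformity is nowhere assumed, matching the fact that $\Arb{R}$ puts no restriction on the auxiliary functions; a uniform version of this statement would require a separate, more delicate analysis that I would not attempt here.
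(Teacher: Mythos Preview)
Your second inclusion (circuits to logic) via the layered number terms $V_0,\dots,V_d$ is essentially the paper's $\val_d(g)$ construction and is fine.

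The first inclusion has a genuine gap: your simulation of $\exists \ol x\,\psi$ by an unbounded $+$-gate over the $0/1$ values of $\psi$, followed by a comparison against $0$, fails over integral domains of positive characteristic. For instance, $R=\F_p[x]$ is an infinite integral domain of characteristic $p$; if exactly $p$ (or any multiple of $p$) tuples $\ol a$ satisfy $\psi(\ol a)$, the sum of the indicators is $0$ in $R$, so your comparison incorrectly reports ``false''. (Even in characteristic $0$ the given strict total order $<$ is not assumed to be compatible with the ring operations, so ``sum of $1$'s is $>0$'' is not automatic either.) The paper singles out precisely this issue and fixes it by translating $\exists x\,\varphi$ as $\neg\forall x\,\neg\varphi$ and $\varphi\lor\psi$ as $\neg(\neg\varphi\land\neg\psi)$, simulating only $\forall$ and $\land$ directly via an unbounded $\times$-gate (followed by $\sign$). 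This works because in an integral domain a product of $0/1$ values is $1$ iff every factor is $1$. With that change your structural induction goes through and matches the paper's argument.
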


\begin{proof}
    The proof for this theorem works similarly to the construction for $\FORing[\Arb{\R}]{\R} + \SUMR + \PRODR  = \ACRing[\R]{0}$~\cite{DBLP:conf/wollic/BarlagV21}, since this construction does not make use of any special properties of the real numbers. 
    
    
    
    The basic idea for this proof is that we first show that for any $\FORing[\Arb{R}]{R} + \SUM + \PROD$-sentence $\varphi$, we can construct a circuit familiy which accepts its input if and only if the input encodes an $R$-structure that satisfies $\varphi$.
    This is basically done by mimicing the behaviour of the logical operators of $\varphi$ using the available gate types and evaluating the formula level by level. 
    A universal quantifier as in $\forall x \varphi(x)$, for instance, is implemented by using a $\sign$-gate (obtained from $<$ as per Definition~\ref{definition:sign-from-order}) on top of a multiplication gate, which has the circuits for $\varphi(a)$ for each $a$ in the skeleton of the encoded structure as its predecessors.
    
    To translate the semantics of $\FORing[]{R}$ into a circuit, we can mostly use the same translations as in the proof for the real case, as for the most part only properties of integral domains are used. 
    We need ring properties for most of these translations and in particular for universal quantifiers, we also need commutativity of multiplication and no zero dividers, hence we require integral domains.
    
    We do need to change the translation for existential quantifiers and $\lor$, however.
    In the real case, the circuit for $\exists x \varphi(x)$ is constructed similarly to the universal case with a $\sign$-gate followed by an addition gate with the circuits for $\varphi(a)$ for each $a$ in the skeleton as its predecessors.
    Since there are infinite integral domains with characteristic greater than $0$, i.\,e., where adding a positive amount of $1$ elements can yield $0$, this would not always produce the desired result.
    However, we can overcome this easily by translating $\exists x \varphi(x)$ to $\neg \forall x \neg \varphi(x)$ and $x \lor y$ to $\neg (\neg x \land \neg y)$, as negation, universal quantifiers and $\land$ are constructible with integral domain properties.
    
    For the converse inclusion, a number term $\val_d(g)$ is created when given a circuit family $(C_n)_{n \in \N}$, such that for each $n$, $\val_d(g)$ evaluates to the value of gate $g$ if $g$ is on the $d$th level of the respective circuit $C_n$.
    For this construction, with integral domain properties no changes need to be made to the formula in the real setting.
\end{proof}

\section{Algebraic Circuits and Guarded Functional Recursion}\label{section:gfr}



In this section, we generalize the logical characterizations of $\NC^1$ and $\AC^1$ by Durand, Haak and Vollmer~\cite{10.1145/3209108.3209179} to the respective complexity classes over integral domains $\NCRing[R]{1}$ and $\ACRing[R]{1}$. We furthermore extend this method to capture the entire $\NCRing[R]{}$ and $\ACRing[R]{}$ hierarchies.


In their paper, Durand, Haak and Vollmer use what they call \emph{guarded predicative recursion} to extend first-order logic in order to capture the logarithmic depth circuit complexity classes $\NC^1$ and $\AC^1$. This essentially amounts to a recursion operator, which halves a tuple of variables (in their numerical interpretation) which is handed down in each recursion step. This ensures that the total recursion depth is at most logarithmic. The halving of the variable tuple is performed by using the fact that addition is expressible in $\FO$ if $\BIT$ and $<$ are expressible~\cite{DBLP:books/daglib/0095988} in the following way
\[
    x \leq \frac{y}{2} \iff x + x \leq y.
\]
Note that we do not define the formula of the halving to be equality, since this is not possible for odd numbers. However, this is not an issue since we only want to bound the worst case recursion depth. In order to capture classes of polylogarithmic depth, we would like to find a suitable factor to replace $\frac{1}{2}$ with, so that the recursion process has polylogarithmic depth as well. As it turns out, for any $i \in \N$, we can assure a recursion depth of $\bO((\log_2 n)^i)$ by using the factor $2^{-\frac{\log_2 n}{(\log_2 n)^i}}$.

\begin{observation}
    Any number $n \in \N$ can be multiplied by the factor $2^{-\frac{\log_2 n}{(\log_2 n)^i}}$ exactly $(\log_2 n)^i$ times, before reaching $1$.
\end{observation}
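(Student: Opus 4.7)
The plan is to carry out a direct computation on the exponent. I would write $f = 2^{-\frac{\log_2 n}{(\log_2 n)^i}}$ and compute the value after $k$ multiplications as $n \cdot f^k$. Using $n = 2^{\log_2 n}$, this becomes
\[
    n \cdot f^k \;=\; 2^{\log_2 n}\cdot 2^{-k\cdot\frac{\log_2 n}{(\log_2 n)^i}} \;=\; 2^{\log_2 n\,\bigl(1 - k/(\log_2 n)^i\bigr)}.
\]
Setting the exponent equal to $0$ and solving for $k$ yields $k = (\log_2 n)^i$, proving the claim.

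The only mildly subtle point is to clarify what ``before reaching $1$'' should mean for arbitrary $n \in \N$, since $\log_2 n$ is generally irrational and hence $(\log_2 n)^i$ need not be an integer. I would therefore preface the calculation with a short remark that the statement is to be read as: after at most $\lceil (\log_2 n)^i\rceil$ multiplications by $f$, the value is at most $1$, since the exponent $\log_2 n\,(1 - k/(\log_2 n)^i)$ is monotonically decreasing in $k$ and becomes nonpositive precisely once $k \geq (\log_2 n)^i$. This framing is consistent with the surrounding discussion, where the point of the halving trick is only to give a worst-case recursion-depth bound rather than exact termination.

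I do not expect any genuine obstacle here; the observation is essentially a one-line exponent manipulation. The main care needed is the (pedantic) matter of integrality of the iteration count, which is handled by the ceiling interpretation above and will not affect any asymptotic use of the bound in the $\bO((\log_2 n)^i)$ recursion-depth statement that follows.
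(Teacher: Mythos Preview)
Your proposal is correct and follows essentially the same approach as the paper: a one-line exponent computation showing $n \cdot f^{(\log_2 n)^i} = 1$. The paper does not bother with the integrality caveat you add, but your ceiling remark is a harmless clarification and does not change the argument.
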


\begin{proof}
    $n \cdot \left( 2^{-\frac{\log_2 n}{(\log_2 n)^i}} \right)^{(\log_2 n)^i} = n \cdot 2^{-\frac{\log_2 n}{(\log_2 n)^i} \cdot (\log_2 n)^i} = n \cdot 2^{- \log_2 n} = n \cdot \frac{1}{n} = 1$ 
\end{proof}

A more general version of this observation can be found in Lemma~\ref{lemma:bound-circuit-depth-by-factor-for-layer} in the appendix. Unfortunately, while it is simple to divide by $2$ when the \BIT predicate is available, it is not at all clear if multiplying by a factor such as $2^{-\frac{\log_2 n}{(\log_2 n)^i}}$ can be done in first-order logic.

We can, however, make use of the ability to divide by $2$ in order to achieve polylogarithmic recursion depth, by instead of dividing a number, essentially dividing the digits of a base $n$ number individually and carrying over once $0$ is reached.

Let us take for example the base $5$ number $444$.
The previously mentioned process is illustrated in the table in Figure~\ref{figure:tuple_example}.
The table is supposed to be read from top to bottom and then from left to right. 

We divide the digits of $444$ from the least to most the significant digit until $0$ is reached. So, the first step is dividing the rightmost digit of $444$ by $2$, getting from $444$ to $442$. After two more divisions of that kind, we reach $0$ and in the subsequent step we reset the rightmost digit and divide the second digit once. This works in a similar way to counting down, where instead of taking away $1$ in each step, we divide by $2$ and carry over in an analogous way. Notably, reaching $000$ from $444$ takes $63 = (\lfloor \log_2 5 \rfloor + 2)^3 - 1$ steps. This is no coincidence: It can easily be shown that for any base $n$ number of $i$ digits, this sort of process reaches the smallest possible element after less than $(\lfloor \log_2 n-1 \rfloor + 2)^i-1$ steps.

\begin{figure}
\begin{center}
   \begin{tabular}{CCCC}
        444 & 244 & 144 & 044\\
        442 & 242 & 142 & 042\\
        441 & 241 & 141 & 041\\
        440 & 240 & 140 & 040\\
        424 & 224 & 124 & 024\\
        422 & 222 & 122 & 022\\
        421 & 221 & 121 & 021\\
        420 & 220 & 120 & 020\\
        414 & 214 & 114 & 014\\
        412 & 212 & 112 & 012\\
        411 & 211 & 111 & 011\\
        410 & 210 & 110 & 010\\
        404 & 204 & 104 & 004\\
        402 & 202 & 102 & 002\\
        401 & 201 & 101 & 001\\
        400 & 200 & 100 & 000\\
    \end{tabular} 
\end{center}
\caption{
Illustration of digit-wise division until $0$ of the base $5$ number $444$ which takes $63 = (\lfloor \log_2 5 \rfloor + 2)^3 - 1$ steps.
}\label{figure:tuple_example}
\end{figure}

In order to perform divisions by $2$, as stated before, we need to be able to express the \BIT predicate in our logic. 
In the classical case with natural numbers, \BIT is defined as follows, where we assume the most significant bit of $j$'s binary representation to be the bit at index $1$:

\begin{definition}\label{definition:bitN}
    Let the relation $\BIT^2 \subseteq \N \times \N$ be defined as follows:
    \[
        \BIT \coloneqq \{(i, j) \mid \text{ the $i$th bit of the binary representation of $j$ is 1, } i, j \in \N\}.\qedhere
    \]
\end{definition}

Since we wish to logically characterize languages decided by circuit families, it is useful to briefly talk about representation of numbers in descriptive complexity. In descriptive complexity, tuples of elements from a finite, ordered domain $A$ are often associated with numbers. This is frequently done by interpreting the tuple as a base $\abs{A}$ number with each element of the tuple representing its position in the ordering of $A$.

\begin{example}\label{example:tuples_as_numbers}
    For example, let $D = \{a, b, c, d\}$ be ordered such that $a < b < c < d$. Then the tuple $(b,d,c,a) = (1,3,2,0)$ would be interpreted as the base $4$ number $1320$, which would correspond to $60$ in decimal.
\end{example}

Whenever we talk about the \emph{numerical interpretation} of a tuple, we refer to this sort of interpretation. Given that we are in this setting, we adjust the definition of \BIT to suit our purposes as follows, where again, the most significant bit of $j$'s binary representation is the bit at index $1$:

\begin{definition}\label{definition:bitA}
    For any ranked structure with universe $D$, let the relation $\BIT^2 \subseteq D^* \times D^*$ be defined as follows:
    \begin{multline*}
      \BIT \coloneqq \{(i, j) \mid \text{ when $i$ and $j$ are taken as their numerical interpretations, the } \\
      \text{$i$th bit of the binary representation of $j$ is 1, } i, j \in D^*\} \qedhere
    \end{multline*}
\end{definition}

With the \BIT predicate and an order relation, we are now able to express division by $2$ in first-order logic. 
We use the fact that whenever \BIT and an order are available, we can express multiplication and addition of numerical interpretations of tuples~\cite{DBLP:books/daglib/0095988}.
This result was shown for plain first-order logic, and since our two-sorted first-order logic $\FORing{R}$ is equivalent to first-order logic if the secondary component is ignored, we can apply it here as well.

We express division by $2$ as follows:
\[
    \ol{x} \leq \ol{y} / 2 \iff \exists \ol{z}~ \ol{x} + \ol{x} = \ol{z} \land \ol{z} \leq \ol{y}
\]

Note again, that since the numerical interpretations of tuples are natural numbers, expressing $\ol{x} \leq \ol{y} / 2$ is really as good as we can do, since $\ol{x} = \ol{y} / 2$ would not work for odd numbers.


Next, we will turn to defining the recursion operator which we have alluded to in the beginning of this section. First, we need a little bit of additional notation, i.\,e., \emph{relativized aggregators}. A relativization of an aggregator is a formula restricting which elements are considered for the aggregator.

\begin{notation}\label{notation:relativized_aggregators}
    For a number term $t$ and a $\FORing[]{R}$ formula $\varphi$ we write
    \[
        \maxi[\ol{x}].(\varphi(\ol{x}))t(\ol{x})
    \]
    as a shorthand for 
    \[
        \maxi[\ol{x}](\chi[\varphi(\ol{x})] \times t(\ol{x})).
    \]
    Analogously we write
    \[
        \sumi[\ol{x}].(\varphi(\ol{x}))t(\ol{x})
    \]
    for 
    \[
        \sumi[\ol{x}](\chi[\varphi(\ol{x})] \times t(\ol{x}))
    \]
    and
    \[
        \prodi[\ol{x}].(\varphi(\ol{x}))t(\ol{x})
    \]
    as a shorthand for 
    \[
        \prodi[\ol{x}](\chi[\varphi(\ol{x})] \times t(\ol{x}) + \chi[\neg \varphi(\ol{x})] \times 1).
    \]
\end{notation}

We now define the \GFRi operator and logics extended by \GFRi of the form $\mathcal{L}+\GFR$.
The idea is to mimic the behavior demonstrated in Figure~\ref{figure:tuple_example}.

\begin{definition}[$\GFR^i$] \label{definition:gfr}
    
    Let $F$ be a set of functions such that \BIT is definable in $\FORing{R}[F]$ and let $\mathcal{L}$ be $\FORing{R}[F]$ or a logic obtained by extending $\FORing{R}[F]$ with some construction rules (such as the sum or the product rule as per Definition~\ref{definition:sum_prod_max_rule}).
    
    For $i \geq 0$, the set of $\mathcal{L}+\GFRi$-formulae over $\sigma$ is the set of formulae by the grammar for $\mathcal{L}$ over $\sigma$ extended by the rule
    \begin{equation*}
        \varphi \coloneqq{} [f(\ol{x},\ol{y_1}, \dots, \ol{y_i}, \ol{y_{i+1}}) \equiv t(\ol{x},\ol{y_1}, \dots, \ol{y_i}, \ol{y_{i+1}},f)] \psi(f),
    \end{equation*}
    where $\psi$ is a $\mathcal{L}$ formula, $f$ is a function symbol and $t$ is a $\mathcal{L}$ number term with free variables $\ol{x}, \ol{y_1}, \dots, \ol{y_i}, \ol{y_{i+1}}$ such that all $\ol{y_j}$ for $1 \leq j \leq i$ contain the same (positive) number of variables and each (sub-)number term involving the symbol $f$ in $t$
    \begin{enumerate}
        \item is of the form $f(\ol{x}, \ol{z_1}, \dots, \ol{z_i}, \ol{z_{i+1}})$, where $\ol{z_1}, \dots, \ol{z_i}, \ol{z_{i+1}}$ are in the scope of a guarded aggregation  \label{item:guarded_aggregation}
        \[
        A_{\ol{z_1}, \dots, \ol{z_i}, \ol{z_{i+1}}}.\left(\bigvee\limits_{j = 1}^i \left( \ol{z_j} \leq \ol{y_j} / 2 \land \bigwedge\limits_{k = 1}^{j - 1} \ol{z_k} \leq \ol{y_k} \right) \land \xi(\ol{y_1}, \dots, \ol{y_i}, \ol{y_{i+1}}, \ol{z_1}, \dots, \ol{z_i}, \ol{z_{i+1}})\right)
        \] 
        with $A \in \{\textbf{max}, \textbf{sum}, \textbf{prod}\}$, $\xi \in \mathcal{L}$ with $\xi$ not containing any relation symbols for relations given in the input structure and
        \item never occurs in the scope of any aggregation (or quantification) not guarded this way.
    \end{enumerate}
    The function symbol $f$ is considered bound in
    \begin{align*}
        \varphi \coloneqq{} [f(\ol{x},\ol{y_1}, \dots, \ol{y_i}, \ol{y_{i+1}}) \equiv t(\ol{x},\ol{y_1}, \dots, \ol{y_i}, \ol{y_{i+1}},f)] \psi(f).
    \end{align*}
    The semantics for the \GFRi operator are defined as follows: 
    Let $\varphi$ be a $\mathcal{L} + \GFRi$ formula over $\sigma$ with the single $\GFRi$ occurrence 
    \begin{align*}
        [f(\ol{x},\ol{y_1}, \dots, \ol{y_i}, \ol{y_{i+1}}) \equiv t(\ol{x},\ol{y_1}, \dots, \ol{y_i}, \ol{y_{i+1}},f)] \psi(\ol{z}, f)
    \end{align*}
    and let $\struc{D} \in \structR(\sigma)$. Then, the operator which is applied to the formula $\psi$, namely $[f(\ol{x},\ol{y_1}, \dots, \ol{y_i}, \ol{y_{i+1}}) \equiv t(\ol{x},\ol{y_1}, \dots, \ol{y_i}, \ol{y_{i+1}},f)]$, defines the interpretation of $f$ in $\psi$ in the following way: For all tuples $\ol{a}, \ol{b_1}, \dots, \ol{b_i}, \ol{b_{i+1}}$ of elements of the universe $D$ of $\struc{D}$ with the same arities as $\ol{x}, \ol{y_1}, \dots, \ol{y_i}, \ol{y_{i+1}}$, respectively,
    \[
        f(\ol{a}, \ol{b_1}, \dots, \ol{b_i}, \ol{b_{i+1}}) = t(\ol{a}, \ol{b_1}, \dots, \ol{b_i}, \ol{b_{i+1}}, f).
    \]
    This means that the formula $[f(\ol{x},\ol{y_1}, \dots, \ol{y_i}, \ol{y_{i+1}}) \equiv t(\ol{x},\ol{y_1}, \dots, \ol{y_i}, \ol{y_{i+1}},f)] \psi(\ol{c}, f)$ holds for a tuple $\ol{c} \in D^{\abs{\ol{c}}}$ with the same arity as $\ol{z}$ if and only if $\struc{D} \models \psi(\ol{c}, f_I)$ where $f_I$ is the interpretation of $f$ as defined by the \GFRi operator.
    Semantics of formulae with several \GFRi operators are defined analogously.
    
    Note that the $(i+1)$th tuple does not get restricted by the guarded aggregation. 
\end{definition}

Having defined the \GFRi operator, it remains to be shown that it indeed ensures polylogarithmic recursion depth in the way that we want it to.
    
\begin{lemma} \label{lemma:gfr-yields-logarithmic-depth}
    Let $\struc{D} \in \structR(\sigma)$ and let $\varphi$ be a formula containing a \GFRi operator. Then the recursion depth of the $\GFRi$ operator is bounded by  $\bO((\log_2 n)^i)$, where $n$ is the size of the universe of $\struc{D}$.
\end{lemma}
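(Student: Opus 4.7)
The plan is to equip the first $i$ argument tuples of $f$ with a lexicographic well-founded measure and to show that every unfolding of the $\GFRi$ operator strictly decreases this measure; the bound on the recursion depth then follows from a chain-length count in the codomain of the measure.

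Concretely, let $m$ denote the common arity of the tuples $\ol{y_1}, \dots, \ol{y_i}$ fixed by the syntax of $\varphi$ and let $n = |D|$. Identifying each $\ol{y_j}$ with its numerical interpretation $y_j$ as an $m$-digit base-$n$ integer in $\{0, 1, \dots, n^m - 1\}$, I would define the measure
\[
    \phi(\ol{y_1}, \dots, \ol{y_i}) := \bigl(\lceil \log_2(y_1 + 1) \rceil, \dots, \lceil \log_2(y_i + 1) \rceil\bigr) \in \{0, 1, \dots, K\}^i,
\]
with $K := \lceil \log_2(n^m + 1) \rceil \leq m \log_2 n + 1$, ordered lexicographically with the first coordinate most significant.

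Next I would verify the strict decrease. Consider any occurrence of $f(\ol{x}, \ol{z_1}, \dots, \ol{z_i}, \ol{z_{i+1}})$ inside the right-hand side $t$ of a \GFRi clause. By condition~\ref{item:guarded_aggregation}, this occurrence sits under a guarded aggregation that witnesses some index $j \in \{1, \dots, i\}$ with $\ol{z_j} \leq \ol{y_j}/2$ and $\ol{z_k} \leq \ol{y_k}$ for all $k < j$. Reading these as the integer conditions $2z_j \leq y_j$ and $z_k \leq y_k$, and using the elementary estimate $\lceil \log_2(\lfloor y/2 \rfloor + 1) \rceil \leq \lceil \log_2(y+1) \rceil - 1$ for $y \geq 1$, the $j$-th coordinate of $\phi$ strictly decreases while coordinates $1, \dots, j-1$ do not increase, giving a strict lexicographic decrease whenever $y_j \geq 1$. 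The only way the guard can admit a call without such a decrease is $y_1 = \dots = y_i = 0$; but then the same guard forces each $\ol{z_k}$ with $k \leq i$ to be the all-zero tuple, and this is the base case at which the recursion bottoms out. The codomain $\{0, 1, \dots, K\}^i$ under lex order is order-isomorphic to $\{0, 1, \dots, (K+1)^i - 1\}$ via base-$(K+1)$ encoding, so its longest strictly decreasing chain has length $(K+1)^i$. With $K = \bO(\log_2 n)$ (since $m$ and $i$ are constants determined by $\varphi$), this yields a recursion depth bound of $\bO((\log_2 n)^i)$, as claimed.

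The main obstacle is that the tuples $\ol{y_{i+1}}$ and each $\ol{z_k}$ for $k > j$ are entirely unconstrained by the guard, so any measure that depended symmetrically on all argument tuples would fail to decrease. The key idea --- motivated by the base-$n$ counter analogy of Figure~\ref{figure:tuple_example} --- is to use lexicographic order with the halved coordinate leading, so that free variation in the right-hand coordinates cannot obstruct the drop at position $j$. A subsidiary subtlety is the boundary behaviour of $y_j \in \{0,1\}$, handled by the explicit log-estimate above and by absorbing the all-zero configuration into the forced base case.
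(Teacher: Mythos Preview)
Your approach --- a lexicographic measure on $(\lceil \log_2(y_1+1)\rceil,\dots,\lceil \log_2(y_i+1)\rceil)$ --- is essentially the paper's own argument, which interprets the first $i$ tuples as the digits of a single number and counts how many halvings each digit admits before a ``carry''. Your formalization via an explicit measure and a chain-length bound on $\{0,\dots,K\}^i$ is in fact cleaner than the paper's informal counting-down picture.

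There is, however, a genuine gap in your treatment of the boundary. The claim that ``the only way the guard can admit a call without such a decrease is $y_1=\cdots=y_i=0$'' is false: for $i=3$ and $(y_1,y_2,y_3)=(5,0,7)$, the disjunct $j=2$ of the guard is satisfied by $(z_1,z_2,z_3)=(5,0,N)$ for any $N$ (since $z_2=0\le y_2/2$ and $z_1\le y_1$), yet $\phi(z)=(3,0,\lceil\log_2(N+1)\rceil)$ can be lexicographically larger than $\phi(y)=(3,0,3)$. Your follow-up claim also fails: at the all-zero configuration the disjunct $j=1$ only forces $z_1=0$, leaving $z_2,\dots,z_i$ unconstrained, so the recursion does not bottom out as you assert. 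The paper's own proof does not address this case either --- it simply asserts that the combined number ``strictly decreases in each step'' --- so the issue is a subtlety of the guard as written rather than something peculiar to your choice of measure; but as stated, both of your boundary claims are incorrect and the argument does not go through without an additional hypothesis ruling out disjuncts with $y_j=0$.
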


\begin{proof}
    Let $n$ be the size of the universe of the structure under which the $\GFRi$ operator is interpreted.
    The bound for the recursion depth of the $\GFRi$ operator stems from the relativization which guards the aggregated variables. 
    Let $f(\ol{x}, \ol{z_1}, \dots, \ol{z_i}, \ol{z_{i+1}})$ be an occurrence of a $\GPR^i$ operator.
    Then the variables in $\ol{z_1}, \dots, \ol{z_i}, \ol{z_{i+1}}$ are in the scope of a guarded aggregation of the form
    \[
        A_{\ol{z_1}, \dots, \ol{z_i}, \ol{z_{i+1}}}.\left(\bigvee\limits_{j = 1}^i\left( \ol{z_j} \leq \ol{y_j} / 2 \land \bigwedge\limits_{k = 1}^{j - 1} \ol{z_k} \leq \ol{y_k} \right) \land \xi(\ol{y_1}, \dots, \ol{y_i}, \ol{y_{i+1}}, \ol{z_1}, \dots, \ol{z_i}, \ol{z_{i+1}})\right)
    \] 
    as per Definition~\ref{definition:gfr}.

    
    Let $z_j$ be the numerical interpretation of $\ol{z_j}$ for all $1 \leq j \leq i$.
    We interpret the tuple $\ol{z_1}, \dots, \ol{z_i}$ as a natural number $z$ of base $n$ where the $j$th digit of $z$ is $z_j$.
    
    First, observe that in this interpretation, the relativization of the variables used in the recursive call ensures that $z$ strictly decreases in each step. The big conjunction $\bigvee\limits_{j = 1}^i \ol{z_j} \leq \ol{y_j} / 2 \land \bigwedge\limits_{k = 1}^{j - 1} \ol{z_k} \leq \ol{y_k}$ makes sure that there is an index $j$, such that $\ol{z_j} \leq \ol{y_j} / 2$, which means that the numerical interpretation of $\ol{z_j}$ is at most half of the numerical interpretation of $\ol{y_j}$. It also ensures that all tuples with smaller indices $\ol{z_k}$ (i.\,e. the more significant tuples in the interpretation of $\ol{z_1}, \dots, \ol{z_i}$ as a base $n$ number) do not increase.
    
    Since each of the tuples $\ol{z_j}$ (in their numerical interpretation) can only get halved at most $\lfloor \log_2 n \rfloor +1$ times before reaching $0$, it takes at most $\lfloor \log_2 n \rfloor + 2$ recursion steps until a tuple other than the $i$th has been halved. In the worst case that is the $(i-1)$th tuple. This process can then be repeated at most $\lfloor \log_2 n \rfloor + 1$ times, before the tuple at the next lower index gets halved. In total, in the worst case, it takes $(\lfloor \log_2 n \rfloor + 2)^{j}$ recursion steps until the $i-j$th tuple gets halved in this process.
    
    This means, that after $(\lfloor \log_2 n \rfloor + 2)^{i} - 1$ recursion steps, the first tuple has reached $0$. Therefore, the total maximum recursion depth is $(\lfloor \log_2 n \rfloor + 2)^{i} - 1 \in \bO((\log_2 n)^i)$.
    
    This process can be thought of as counting down a base $\log_2 n$ number. The idea for it has already been visualized in Figure~\ref{figure:tuple_example}. It is also explicitly illustrated in Figure~\ref{fig_seq_812} and Figure~\ref{fig_seq_822}, for a sequence which we will define shortly in Definition~\ref{definition:sequence_d} to make use of exactly this kind of process.
\end{proof}

Since our final goal is to characterize both $\ACRing[R]{i}$ and $\NCRing[R]{i}$, we need to also define aggregators which model the properties of $\NCRing[R]{}$ circuits.
For this purpose, we introduce \emph{bounded aggregators}, i.\,e., relativized aggregators where we only consider the two elements of maximal size meeting the condition in the relativization.

\begin{definition}
We define the bounded aggregators $\sumib[\ol{x}]$ and $\prodib[\ol{x}]$.
They are used in the same way as the aggregators defined earlier in Definition~\ref{definition:sum_prod_max_rule}.
The semantics are defined as follows:
\begin{multline*}
\sumib[\ol{x}].(\varphi(\ol{x})t(\ol{x}, \ol{w}) \equiv\\
\sumi[\ol{x}].(\varphi(\ol{x}) \land \forall \ol{y}\; \forall \ol{z} ((\ol{y} \neq \ol{z} \land \ol{x} < \ol{y} \land \ol{x} < \ol{z}) \to (\neg \varphi(\ol{y}) \lor \varphi(\ol{z}))))t(\ol{x}, \ol{w})    
\end{multline*}
The bounded aggregators $\prodib[\ol{x}]$ and $\maxib[\ol{x}]$ are defined analogously.
\end{definition}


With this bounded aggregation, we can now define a slightly weaker version of the guarded functional recursion from Definition~\ref{definition:gfr}, which we call bounded guarded functional recursion \GFRbi. This allows us then to define logics of the form $\FORing{R}[F] + \GFRi$ or $\FORing{R}[F] + \GFRbi$.

\begin{definition}[\GFRbi]
A formula is in $\FORing{R}[F]+\GFRbi$ if the same conditions as in Definition~\ref{definition:gfr} are met, but instead of a guarded aggregation in (\ref{item:guarded_aggregation}), we require a bounded guarded aggregation.
\end{definition}

Our goal in the following is to characterize the $\ACRing[R]{}$ and $\NCRing[R]{}$ hierarchies using first-order logic and guarded functional recursion. For that purpose, we now define a sequence which we will later use as part of the numbers of our gates in order to encode the gates' depth into their numbers. 
The idea behind the construction of of this sequence will be that for a circuit family $(C_n)_{n \in \N}$ with depth bounded by $c \cdot (\log_2 n)^i$, each of the sequence's elements is essentially a $i$-digit base $c \cdot \lfloor\log_2 n\rfloor - 1$ number with each digit being encoded in unary and padded by zeroes. 
For readability purposes, we will refer to this encoding simply as a unary encoding. The sequence can then be seen as counting down to $0$ in that interpretation. 
This will then result in a length of $c^i \cdot \lfloor \log_2 n\rfloor^i \in \bO((\log_2 n)^i)$ for fixed $i$. We begin by introducing our conventions regarding unary encodings of numbers.

\begin{definition}
    Let $n, \ell \in \N$ such that $\ell \geq n$. 
    Then we will refer to the function $\un_\ell \colon \N \to \{0, 1\}^\ell$ defined as
    \[
    \un_\ell(n) \coloneqq 0^{\ell-n}1^n 
    \]
    as the (length $\ell$) unary encoding of $n$.
\end{definition}

\begin{definition}
    For any binary string \ol{a} of the form 
    \[
    \ol{a} = 0^{k-m}1^m
    \]
    for some $k, m$, we define the function $\uval \colon \{0, 1\}^\star \to \N$ defined as 
    \[
    \uval(\ol{a}) \coloneqq m   
    \]
    and call $\uval(\ol{a})$ the value of the unary encoding \ol{a}. 
\end{definition}

We now proceed to define the aforementioned sequence \d which we will later use to essentially encode our circuit's gates' depth into their gate numbers.

\begin{definition}\label{definition:sequence_d}
    For each $n, c, i \in \N$, we define the sequence $\d(n, c, i)$ as follows. For readability purposes we leave out the arguments $(n, c, i)$ in the definition of the sequence and only write $\d$ instead of $\d(n, c, i)$. 
    \begin{enumerate}
        \item Each element $\d_\ell$ of \d consists of $i$ tuples $\d_{\ell, j}$ ($1 \leq j \leq i$), each of which is the length $\lfloor c \cdot \log_2 n \rfloor - 1$ unary encoding of a number in $[0, \lfloor c \cdot \log_2 n \rfloor - 1]$.
        \item $\d_1 = 1 \dots 1$ (I.\,e., $\d_{1,j} = \un_{\lfloor c \cdot \log_2 n\rfloor - 1}(c \cdot \lfloor \log_2 n \rfloor - 1) = 1 \dots 1$ for all $j$ with $1 \leq j \leq i$.)
        \item $\d_{\ell + 1, i} = \begin{cases}
            \un_{\lfloor c \cdot \log_2 n\rfloor -1}(\lfloor c \cdot \log_2 n\rfloor - 1 ) &\textnormal{if } \uval(\d_{\ell, i}) = 0,\\
            \un_{\lfloor c \cdot \log_2 n\rfloor -1}(\uval(\d_{\ell, i}) - 1) & \textnormal{otherwise,}
        \end{cases}$ \\
        for all $\ell$ where $\d_\ell \neq 0 \dots 0$.
        \item $\d_{\ell + 1, j} = \begin{cases}
            \un_{\lfloor c \cdot \log_2 n\rfloor -1}(\uval(\d_{\ell, j} - 1)) & \textnormal{if } \uval(\d_{\ell, j+1}) = 0, \\
            \d_{\ell, j} & \textnormal{otherwise,}
        \end{cases}$\\
        for $j < i$.\qedhere
    \end{enumerate}
\end{definition}

Examples for the sequence \d can be seen in Figures \ref{fig_seq_812} and \ref{fig_seq_822}.

\begin{figure}
    \begin{center}
        \begin{tabular}{c|cc}
            \diagbox{$\ell$}{$i$} & $1$ & $2$ \\
            \hline 
            $1$ & $11$ & $11$\\
            $2$ & $11$ & $01$\\
            $3$ & $11$ & $00$\\
            $4$ & $01$ & $11$\\
            $5$ & $01$ & $01$\\
            $6$ & $01$ & $00$\\
            $7$ & $00$ & $11$\\
            $8$ & $00$ & $01$\\
            $9$ & $00$ & $00$
        \end{tabular}    
    \end{center}
    \caption{
    The sequence $\d(8, 1, 2)$. Since $c \cdot \lfloor\log_2 8\rfloor - 1 = 2$, each element of $\d(8, 1, 2)$ contains $i = 2$ tuples of length $2$. Each line is one element of the sequence, the columns determine the tuples in the elements. This means, that the first element here is the element $(11, 11)$, the second one is $(11, 01)$ and so on.
    }
    \label{fig_seq_812}
\end{figure}

\begin{figure}
    \begin{center}
        \begin{tabular}{c|cc}
            \diagbox{$\ell$}{$i$} & $1$ & $2$ \\
            \hline 
            $1$ & $11111$ & $11111$\\
            $2$ & $11111$ & $01111$\\
            $3$ & $11111$ & $00111$\\
            $4$ & $11111$ & $00011$\\
            $5$ & $11111$ & $00001$\\
            $6$ & $11111$ & $00000$\\
            $7$ & $01111$ & $11111$\\
            $8$ & $01111$ & $01111$\\
            $9$ & $01111$ & $00111$\\
            $10$ & $01111$ & $00011$\\
            $11$ & $01111$ & $00001$\\
            $12$ & $01111$ & $00000$\\
            $13$ & $00111$ & $11111$\\
            $14$ & $00111$ & $01111$\\
            $15$ & $00111$ & $00111$\\
            $16$ & $00111$ & $00011$\\
            $17$ & $00111$ & $00001$\\
            $18$ & $00111$ & $00000$\\
            $19$ & $00011$ & $11111$\\
            $20$ & $00011$ & $01111$\\
            $21$ & $00011$ & $00111$\\
            $22$ & $00011$ & $00011$\\
            $23$ & $00011$ & $00001$\\
            $24$ & $00011$ & $00000$\\
            $25$ & $00001$ & $11111$\\
            $26$ & $00001$ & $01111$\\
            $27$ & $00001$ & $00111$\\
            $28$ & $00001$ & $00011$\\
            $29$ & $00001$ & $00001$\\
            $30$ & $00001$ & $00000$\\
            $31$ & $00000$ & $11111$\\
            $32$ & $00000$ & $01111$\\
            $33$ & $00000$ & $00111$\\
            $34$ & $00000$ & $00011$\\
            $35$ & $00000$ & $00001$\\
            $36$ & $00000$ & $00000$\\
        \end{tabular}    
    \end{center}
    \caption{
    The sequence $\d(8, 2, 2)$. Since $c \cdot \lfloor\log_2 8\rfloor - 1 = 5$, each element of $\d(8, 2, 2)$ contains $i = 2$ tuples of length $5$. Each line is one element of the sequence, the columns determine the tuples in the elements. This means, that the first element here is the element $(11111, 11111)$, the second one is $(11111, 01111)$ and so on.
    }
    \label{fig_seq_822}
\end{figure}

\begin{remark}
    Note that substracting the value $1$ in this unary encoding can be seen as an integer division by $2$ in binary. This will become useful later when putting this into the context of guarded functional recursion with \BIT.
\end{remark}

Note that the length (i.\,e. the number of elements) of $\d(8, 1, 2)$ is \begin{align*}
    9 = 1^2 \cdot \lfloor \log_2 8 \rfloor^2 (= c^i \cdot \lfloor \log_2 n \rfloor^i).
\end{align*}
and the length of $\d(8, 2, 2)$ is 
\begin{align*}
    36 = 2^2 \cdot \lfloor \log_2 8 \rfloor^2 (= c^i \cdot \lfloor \log_2 n \rfloor^i).
\end{align*}
This is no coincidence. Next, we show that this observation holds in general.

\begin{lemma}\label{lemma:seq_d_length}
    Let $n, c, i \in \N$. 
    Then, $\d(n, c, i)$ has length $c^i \cdot \lfloor\log_2 n\rfloor^i$.
\end{lemma}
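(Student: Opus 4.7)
The plan is to interpret each element $\d_\ell$ of the sequence as an integer in base $b \coloneqq c \cdot \lfloor \log_2 n \rfloor$, show that consecutive elements differ by exactly one in this interpretation, and then conclude that since $\d_1$ represents $b^i - 1$ and the sequence terminates at the all-zero state, it must have $b^i = c^i \cdot \lfloor \log_2 n \rfloor^i$ elements.

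Concretely, I would define the valuation
\[
v_\ell \coloneqq \sum_{j=1}^{i} \uval(\d_{\ell, j}) \cdot b^{i-j},
\]
so that the $j$-th tuple plays the role of the $j$-th most significant digit in a base-$b$ representation. The base case $\d_{1,j} = \un_{\lfloor c \log_2 n \rfloor - 1}(b-1)$ gives $v_1 = (b-1) \sum_{j=1}^{i} b^{i-j} = b^i - 1$ by a geometric sum.

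The crux of the proof is to establish the invariant $v_{\ell+1} = v_\ell - 1$ whenever $v_\ell > 0$. To see this, let $k^*$ be the largest index such that $\uval(\d_{\ell, k^*}) > 0$, i.\,e., the rightmost non-zero digit. The recurrence then prescribes that: digit $k^*$ decrements by one (either by the rule for $j = i$ when $k^* = i$, or because $\uval(\d_{\ell, k^*+1}) = 0$ when $k^* < i$); digit positions $k > k^*$ are reset to $b - 1$, via the reset rule for $j = i$ together with the decrement condition propagating through the chain of zero digits between $k^*$ and $i$; and digit positions $k < k^*$ remain unchanged, since by maximality of $k^*$ we have $\uval(\d_{\ell, k+1}) \neq 0$ so the stability branch of the recurrence fires. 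An arithmetic check then gives
\[
v_{\ell+1} - v_\ell \;=\; -b^{i-k^*} + (b-1)\sum_{k = k^*+1}^{i} b^{i-k} \;=\; -b^{i-k^*} + (b^{i-k^*} - 1) \;=\; -1,
\]
confirming that each step is a base-$b$ decrement by one. Since the sequence halts when $v_\ell = 0$, it runs through the values $b^i - 1, b^i - 2, \dots, 0$, and hence has length $b^i = c^i \cdot \lfloor \log_2 n \rfloor^i$.

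The main obstacle will be the cascading behavior in the inductive step: the recurrence for inner digits ($j < i$) consults only the neighboring digit's value, so some care is needed to verify that the ``borrow'' really propagates through all zero digits between positions $k^*$ and $i$ rather than stopping short, and that digits below $k^*$ indeed remain fixed. This is handled by a nested case distinction on $k^*$ (with the sub-case $k^* = i$ trivial and $k^* < i$ requiring a small secondary induction from position $i$ down to $k^*+1$), which is the only non-routine part of the argument.
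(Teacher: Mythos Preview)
Your proposal is correct and follows essentially the same approach as the paper: interpret each $\d_\ell$ as an $i$-digit base-$b$ number with $b = c \cdot \lfloor \log_2 n \rfloor$, observe that the successor rule decrements this number by one, and conclude that the sequence runs from $b^i - 1$ down to $0$. The paper states this in a single sentence without verifying the decrement claim, whereas you spell out the borrow propagation via the index $k^*$ and the geometric-sum check; this extra detail is welcome but does not constitute a different method.
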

\begin{proof}
    Since for each element $e$ in $\d(n, c, i)$ the successor rule can be interpreted as subtracting $1$ from $e$ when $e$ is seen as a base $c \cdot \lfloor \log_2 n \rfloor$ number with $i$ digits, we are starting at the largest possible element in that sense (i.\,e. $1 \dots 1$, which would correspond to $(c \cdot \lfloor \log_2 n \rfloor)^i - 1$) and we are counting down to the lowest possible element (i.\,e. $0 \dots 0$, corresponding to $0$), there are exactly $(c \cdot \lfloor \log_2 n \rfloor)^i = c^i \cdot \lfloor \log_2 n\rfloor^i$ elements in $\d(n, c, i)$.
\end{proof}

The remaining problem that stands in the way of using the sequence $\d$ for the numbering of gates in descriptions for circuits is that the length of the elements in $\d$ depends on $n$ (which will be the number of input gates of our circuit). However, we can remedy this, since we essentially have access to base $n$ numbers in the description for a circuit with $n$ input gates (by virtue of interpreting circuit inputs as circuit structures).
Combining those with the \BIT predicate and now interpreting the unary encoded tuples in elements of $\d$ as binary numbers allows us to encode elements of $\d$ using a constant number of digits.

\begin{observation}
Let $n, c \in \N$ and $1 \leq \ell \leq c \cdot \lfloor \log_2 n \rfloor$.
The number $2^\ell - 1$ can be encoded by a base $n$ number of length $c$.
\end{observation}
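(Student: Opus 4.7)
The plan is to verify that the value $2^\ell - 1$ falls within the range representable by $c$ base-$n$ digits, namely $\{0, 1, \dots, n^c - 1\}$. So the task reduces to proving the inequality $2^\ell - 1 \leq n^c - 1$, which is equivalent to $2^\ell \leq n^c$.

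First, I would observe that from the hypothesis $\ell \leq c \cdot \lfloor \log_2 n \rfloor$, combined with the trivial bound $\lfloor \log_2 n \rfloor \leq \log_2 n$, we obtain
\[
\ell \leq c \cdot \log_2 n.
\]
Exponentiating both sides with base $2$ (which is monotone) yields
\[
2^\ell \leq 2^{c \cdot \log_2 n} = (2^{\log_2 n})^c = n^c,
\]
so $2^\ell - 1 \leq n^c - 1$. Since every nonnegative integer strictly less than $n^c$ admits a base $n$ representation using at most $c$ digits (padded with leading zeroes if necessary), this gives the required encoding of $2^\ell - 1$.

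There is no real obstacle here; the statement is essentially a bookkeeping remark to justify that the unary-encoded tuples appearing in elements of the sequence $\d(n, c, i)$ (each of length $\lfloor c \cdot \log_2 n \rfloor - 1$) can, when reinterpreted as binary numbers of the form $2^\ell - 1$, be represented by a constant-length base $n$ number. The only subtlety worth flagging in the write-up is the distinction between $\lfloor \log_2 n \rfloor$ and $\log_2 n$, which is handled by the monotonicity argument above, and the fact that a base $n$ representation of length $c$ is interpreted with leading-zero padding allowed, so the bound on the represented value (not the number of significant digits) is what matters.
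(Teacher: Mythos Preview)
Your proof is correct and follows essentially the same approach as the paper: both arguments bound the largest possible value $2^{c\lfloor\log_2 n\rfloor}-1$ by $n^c-1$ and conclude that it (and hence every smaller $2^\ell-1$) fits in $c$ base-$n$ digits. The paper's version is slightly terser, omitting the explicit exponentiation step and the remark about leading-zero padding, but the mathematical content is identical.
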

\begin{proof}
    The largest possible number of that form is $2^{c \cdot \lfloor \log_2 n \rfloor} - 1 \leq n^c - 1$, which corresponds to `$\underbrace{{n-1} \dots {n-1}}_{\text{$c$ times}}$' in base $n$.
    Therefore, $2^{c \cdot \lfloor \log_2 n \rfloor} - 1$ can be encoded with $c$ base $n$ digits and thus also all smaller natural numbers can be encoded in this way.
\end{proof}

We can thus encode the binary valuations of tuples in elements of $\d$ as base $n$ numbers of length $c$.
Therefore, each element of $\d$ can be encoded using $i$ base $n$ numbers of length $c$ (or $i \cdot c$ base $n$ digits).

Before we proceed to using the sequence $\d$ for circuit descriptions, we need one more lemma which provides a useful property of $\ACRing[R]{i}$ resp. $\NCRing[R]{i}$ circuits. 
We would like to be able to talk about the \textit{depth} of gates, i.\,e., the distance of a gate to the input gates of the circuit. 
For this reason, we will establish the fact that for the circuit families we investigate, circuits exist where for each gate $g$, each input-$g$ path has the same length.




\begin{lemma}\label{lemma:balanced_dag}
    Let $L$ be in $\ACRing[R]{i}$ or $\NCRing[R]{i}$ via the circuit family $\cC = (C_n)_{n \in \N}$.
    Then there exists a circuit family $\cC' = (C'_n)_{n \in \N}$ deciding $L$, such that for all $n \in \N$ and each gate $g$ in $C'_n$, each path from an input gate to $g$ in $C'_n$ has the same length. 
    We call $C'_n$ a \emph{balanced DAG}.
\end{lemma}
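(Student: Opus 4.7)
The plan is to build $C'_n$ from $C_n$ by first computing, for every gate $g$ of $C_n$, the value $d(g)$ defined as the length of the longest path from any input gate to $g$ (with $d(g) = 0$ for input and constant gates). Since $C_n$ is a DAG, $d$ is well-defined and can be computed by a simple topological pass. I would then replace each edge $(g', g)$ of $C_n$ for which $d(g') < d(g) - 1$ by a chain of $d(g) - 1 - d(g')$ fresh \emph{identity gates} feeding into $g$. An identity gate can be implemented as an arithmetic $+$-gate of indegree $1$ (the definition explicitly allows arithmetic gates of any indegree $\geq 1$); equivalently, for bounded fan-in settings it can be realized as a multiplication by the constant $1$ or an addition of the constant $0$. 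The output gate of $g$ at the end of such a chain carries the same value as $g'$ by the ring axioms, so the function computed by the circuit is unchanged.

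After this rewiring, each predecessor of any gate $g$ in $C'_n$ has depth exactly $d(g)-1$, so by induction on $d(g)$, every input-to-$g$ path in $C'_n$ has length $d(g)$, which is exactly the balanced-DAG property. The depth of $C'_n$ equals the depth of $C_n$, so the $\bO((\log_2 n)^i)$ depth bound is preserved. For the size bound, at most $\mathrm{depth}(C_n) - 1$ identity gates are inserted per original edge, and the number of edges is at most quadratic in the size of $C_n$, so the total size grows by a factor of at most $\bO(n^{\bO(1)} \cdot (\log_2 n)^i) = n^{\bO(1)}$, which stays polynomial. For the $\NCRing[R]{i}$ case, inserting unary identity gates clearly preserves the bounded fan-in condition on arithmetic gates, and comparison gates keep their indegree $2$ because we only touch their incoming edges by inserting identity chains that each have indegree $1$.

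Finally, I would verify that the five gate-type conditions of the circuit definition are preserved: input and constant gates still have indegree $0$; arithmetic gates retain their label and computed function (their operands' values are unchanged); comparison gates still have indegree $2$; and output gates still have indegree $1$ (an identity chain can be inserted before an output gate if needed, which keeps its indegree $1$). The resulting family $\cC' = (C'_n)_{n \in \N}$ therefore decides $L$, lies in the same complexity class as $\cC$, and every $C'_n$ is a balanced DAG.

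The main obstacle I anticipate is essentially bookkeeping rather than a deep issue: one must be careful that inserting identity gates does not accidentally blow up the size beyond polynomial or violate the fan-in bound in the $\NCRing[R]{i}$ case. Both are handled cleanly by using unary $+$-gates (which are permitted by the definition of arithmetic nodes) and by bounding the number of inserted gates by (edges)$\times$(depth), both of which are polynomially bounded.
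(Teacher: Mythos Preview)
Your proposal is correct and follows essentially the same approach as the paper: both insert chains of unary $+$-gates (identity gates) along edges that skip levels so that every predecessor of a gate lands at depth exactly one less, and both bound the blow-up by (number of edges) $\times$ (depth) to stay polynomial while leaving depth unchanged. Your write-up is in fact a bit more careful than the paper's, since you explicitly address fan-in preservation in the $\NCRing[R]{i}$ case and the well-formedness of the resulting circuit.
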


\begin{proof} 
    Let $L$ be in $\ACRing[R]{i}$ or $\NCRing[R]{i}$ via $\cC = (C_n)_{n \in \N}$.
    For each circuit $C_n \in \cC$, we construct a circuit $C'_n$, such that $f_{C_n} = f_{C'_n}$ and for each gate $g$ in $C'_n$, all input-$g$-paths in $C'_n$ have the same length.
    We transform $C_n$ into $C'_n$ by creating paths of dummy gates to replace edges that go over more than one level of depth.
    
    Let the depth of $C_n$ be bounded by $c_1 \cdot (\log_2 n)^i$ and let its size be bounded by $c_2 \cdot n^{c_3}$.
    We proceed by structural induction over the depth $d$ of gates $g$ in $C_n$, i.\,e., the maximum length of paths from an input gate to $g$.
    
    \begin{description}
        \item[$d = 1$:] Each gate $g$ at depth $d$ is a direct successor of an input gate. Therefore, no changes need to be made, since all gates at depth $d$ only have input-$g$ paths of length $1$ and therefore have the desired property.
        \item[$d \to d + 1$:] For each gate at depth $d+1$, all predecessors are gates of depth $< d+1$ for which it holds that all paths from input gates to them are of the same length. Keep all those predecessors at depth $d$ as they are and replace the edge from predecessors of smaller depths $d' < d$ to $g$ by a path of dummy (unary addition) gates of length $d - d'$. Now all paths from input gates to $g$ have exactly length $d + 1$ and we only added at most $c_1 \cdot (\log_2 n)^i$ gates per predecessor of $g$. In total, the number of dummy gates added for $g$ is bounded by $c_2 \cdot n^{c_3} \cdot c_1 \cdot (\log_2 n)^i$.
    \end{description}
    
    The resulting circuit is $C'_n$. Since addition gates with only a singular predecessor are essentially identity gates, the value of each gate in any computation of $C'_n$ remains the same. Thus, $f_{C_n} = f_{C'_n}$.
    
    Additionally, for each gate in $C_n$, we add at most $c_1 \cdot (\log_2 n)^i \cdot c_2 \cdot n^{c_3}$ gates to arrive at $C'_n$.
    Therefore, the size of $C'_n$ is bounded by $c_1 \cdot (\log_2 n)^i \cdot c_2 \cdot n^{c_3} \cdot c_2 \cdot n^{c_3} = c_1 \cdot (\log_2 n)^i \cdot (c_2 \cdot n^{c_3})^2 \in \bO(n^{\bO(1)})$.
    The depth of $C'_n$ does not change, since we only ever add gates, when longer paths within $C_n$ exist so that they end up at the same length.
    
    In total, $C'_n$ computes the same function as $C_n$ -- and thus decides $L$ -- and has the property that for each of its gates $g$, all input-$g$-paths have the same length.
\end{proof}

As previously mentioned, whenever we are dealing with balanced DAGs, we will refer to the unambiguous length from input gates to a gate $g$ as the \emph{depth} of $g$.

Now we will turn to a lemma which will then finally enable us to use the previously defined sequence $\d$ to encode our gates' depth into their circuit numbers.
This, combined with Lemma~\ref{lemma:seq_d_length}, provides us with a way to logically ensure the polylogarithmic depth of a circuit given as a circuit structure. 

\begin{lemma}
Let $L$ be in $\ACRing[R]{i}$ or $\NCRing[R]{i}$ via the circuit family $\cC = (C_n)_{n \in\N}$.
Then there exists an $\ACRing[R]{i}$ (resp. $\NCRing[R]{i}$) circuit family $\cC' = (C'_n)_{n \in \N}$ deciding $L$ with depth bounded by $c \cdot (\log_2 n)^i$, such that the gates of each circuit of $\cC'$ are numbered as base $n$ numbers, where for each path from an input gate to the output gate, the first $c \cdot i$ digits encode elements of $\d$ in the order that they appear in the sequence.

The numbering after the first $c \cdot i$ digits can be chosen arbitrarily.
\label{lemma:normal_form_seq_d}
\end{lemma}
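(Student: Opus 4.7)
The plan is to produce $\cC'$ from $\cC$ in three stages: first balance the circuits, then pad them so their depth matches the length of the sequence $\d$, and finally assign gate numbers whose leading digits encode $\d$.

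First, I apply Lemma~\ref{lemma:balanced_dag} to $\cC$, obtaining an equivalent balanced DAG circuit family whose depth is in $\bO((\log_2 n)^i)$; in such a family, every gate has a well-defined depth. I then fix the constant $c$ large enough so that $c^i \cdot \lfloor \log_2 n \rfloor^i$ upper bounds the depth of each balanced circuit for all sufficiently large $n$. By Lemma~\ref{lemma:seq_d_length}, this quantity is exactly the length of $\d(n, c, i)$.

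Second, I pad each circuit with chains of unary addition (identity) gates so that every input-to-output path has length exactly $c^i \cdot \lfloor \log_2 n \rfloor^i$. Since each dummy gate has fan-in $1$, this neither changes the function computed nor violates the bounded fan-in property needed for $\NCRing[R]{i}$. The padding adds at most $\bO((\log_2 n)^i)$ gates per edge of the original circuit, so the total size remains polynomial in $n$, and the final depth is $\bO((\log_2 n)^i)$ as required.

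Third, I renumber the gates of the padded circuit. For a gate whose depth is $\ell$, the first $c \cdot i$ base-$n$ digits of its number encode the $\ell$-th element of $\d(n, c, i)$ via the convention described just before the lemma: each of the $i$ unary tuples of length $\lfloor c \cdot \log_2 n \rfloor - 1$ making up the element is reinterpreted as a binary number and written out in $c$ base-$n$ digits, using the bound $2^{c \cdot \lfloor \log_2 n \rfloor} - 1 \leq n^c - 1$. The remaining digits are chosen arbitrarily, subject only to uniqueness among gates at the same depth; since the circuit has polynomial size, $\bO(1)$ additional base-$n$ digits suffice. Along any input-to-output path, the leading digits of the gates on that path then encode consecutive elements of $\d(n, c, i)$ in the order they appear.

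The main obstacle is making the single constant $c$ play three consistent roles at once: bounding the final depth, parameterizing $\d(n, c, i)$, and determining the number of base-$n$ digits used to encode each unary tuple of an element of $\d$. This is resolved by choosing $c$ sufficiently large up front and then observing that Lemma~\ref{lemma:seq_d_length} yields the needed sequence length, the polynomial-size bound leaves only $\bO(1)$ trailing digits to worry about, and the inequality $2^{c \cdot \lfloor \log_2 n \rfloor} \leq n^c$ justifies the leading-digit encoding.
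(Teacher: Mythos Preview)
Your proposal is correct and follows essentially the same three-stage approach as the paper: balance via Lemma~\ref{lemma:balanced_dag}, pad with dummy unary addition gates to exact depth $c^i\cdot\lfloor\log_2 n\rfloor^i$, then prepend the depth-indexed element of $\d(n,c,i)$ to an arbitrary base-$n$ gate number. If anything, you are slightly more explicit than the paper in noting that the fan-in-$1$ dummy gates preserve the $\NCRing[R]{i}$ bound and that the padding keeps the size polynomial.
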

\begin{proof}
    Let $L \in \ACRing[R]{i}$ or $L \in \NCRing[R]{i}$ via the circuit family $\cC = (C_n)_{n \in \N}$ the depth of which is bounded by $c_1 \cdot (\log_2 n)^i$. 
    Without loss of generality, let the circuits of $\cC$ be balanced DAGs as per Lemma~\ref{lemma:balanced_dag}. 
    That means that for each circuit $C_n$ in $\cC$, for each gate $g$ in $C_n$, the length of all paths from input gates to $g$ is the same.
    Let $c \in \N$ be such that $c^i \cdot \lfloor \log_2 n\rfloor^i$ is larger than the depth of $C_n$ (which is bounded by $c_1 \cdot (\log_2 n)^i$).
    We pad each path in $C_n$ to length $c^i \cdot \lfloor \log_2 n\rfloor^i$ by replacing each edge from an input gate to a gate in $C_n$ by a path of dummy gates of length $c^i \cdot \lfloor \log_2 n\rfloor^i - \textit{depth}(C_n)$ so that the resulting circuit has exactly depth $c^i \cdot \lfloor \log_2 n\rfloor^i$.
    
    We now use any base $n$ numbering for the gates of $C_n$ and for each gate $g$ prepend the $\text{depth}(g)$th element of $\d(n, c, i)$ to the number of $g$. 
    Since we made sure that each input-output-path in $C_n$ has length exactly $c^i \cdot \lfloor \log_2 n\rfloor^i$, we can encode exactly the sequence $\d(n, c, i)$ in the numbers of each input-output-path.
    So now for each input-output-path, the first $c \cdot i$ digits of gate numbers encode the elements of $\d(n, c, i)$ in the order that they appear in the sequence.
\end{proof}

With the normal form from Lemma~\ref{lemma:normal_form_seq_d} and the previous definitions, we can now turn to a theorem characterizing $\ACRing[R]{i}$ and $\NCRing[R]{i}$ logically by tying it all together.

\begin{theorem}
\phantom{a}
    \begin{enumerate}
        \item $\ACRing[R]{i} = \FORing{R}[\Arb{R}]+\SUM+\PROD+\GFRi$
        \item $\NCRing[R]{i} = \FORing{R}[\Arb{R}]+\SUM+\PROD+{\GFRbi}$
    \end{enumerate}
\end{theorem}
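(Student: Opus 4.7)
The plan is to prove both equalities by showing the two inclusions separately, with the $\ACRing[R]{i}$ and $\NCRing[R]{i}$ cases differing only in whether we use the unbounded aggregators $\sumi,\prodi$ or their bounded counterparts $\sumib,\prodib$.

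For the circuit-to-logic direction, I would start by applying Lemma~\ref{lemma:normal_form_seq_d} to put the given circuit family $\cC = (C_n)_{n \in \N}$ into a normal form where the first $c \cdot i$ base-$n$ digits of each gate number encode the element of $\d(n,c,i)$ corresponding to the depth of that gate. Using the $\Arb{R}$-symbols, the circuit topology, the gate types (input, constant, $+$, $\times$, $<$, output), and the predecessor relation can all be encoded as interpreted functions available to the formula. The key construction is a number term $\val(g, \ol{y_1}, \dots, \ol{y_i}, \ol{y_{i+1}})$ defined by a single application of $\GFRi$, where the tuples $\ol{y_1}, \dots, \ol{y_i}$ encode the current position in the sequence $\d$ (i.\,e., the depth-index of $g$) and $\ol{y_{i+1}}$ is used as the ``fresh'' tuple which may vary freely. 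In the recursive step, when $g$ is an internal gate we aggregate via $\sumi$ or $\prodi$ (resp.\ $\sumib,\prodib$ in the $\NCRing[R]{i}$ case) over the predecessors of $g$, where the recursive calls are of the shape $\val(h, \ol{z_1}, \dots, \ol{z_i}, \ol{z_{i+1}})$ with $\ol{z_1}, \dots, \ol{z_i}$ constrained by the guarded aggregation from Definition~\ref{definition:gfr} to correspond to the next element of $\d$. The counting-down behavior of $\d$ is exactly mirrored by the halving condition $\ol{z_j} \leq \ol{y_j} / 2$ with the higher-order tuples preserved, which is available because $\BIT \in \Arb{R}$. At the base case $\ol{y_1} = \dots = \ol{y_i} = \ol{0}$, the gate is an input or constant and we return its value directly from the interpretation.

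For the logic-to-circuit direction, the starting point is Theorem~\ref{theorem:FOArb}, which already handles every fragment except the $\GFRi$ (resp.\ $\GFRbi$) operator. I would unfold the $\GFRi$-operator level by level: at each unfolding step, every occurrence of $f(\ol{x}, \ol{z_1}, \dots, \ol{z_i}, \ol{z_{i+1}})$ is expanded into the defining number term $t$, and the guarded aggregation implements that expansion as a polynomial-size circuit of constant depth using the translations from Theorem~\ref{theorem:FOArb}. By Lemma~\ref{lemma:gfr-yields-logarithmic-depth} this unfolding terminates after at most $\bO((\log_2 n)^i)$ rounds, so the resulting circuit has depth $\bO((\log_2 n)^i)$ and polynomial size, placing it in $\ACRing[R]{i}$. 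For the $\NCRing[R]{i}$ case, the crucial observation is that a bounded aggregator $\sumib$ or $\prodib$ ranges over at most two argument tuples satisfying the relativization, so its natural translation yields a gate of fan-in at most~$2$; combined with the $\GFRbi$-condition forbidding unbounded aggregations in recursive subterms, every arithmetic gate in the resulting circuit has fan-in at most $2$.

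The main obstacle I anticipate is the bookkeeping in the circuit-to-logic direction: one has to verify that the restriction imposed by the guarded aggregation in $\GFRi$ really does allow the recursion to walk through the sequence $\d(n,c,i)$ without skipping or revisiting depths, which is where the encoding of depth via the $\d$-prefix of gate numbers (Lemma~\ref{lemma:normal_form_seq_d}) together with the halving condition from Definition~\ref{definition:gfr} must be matched carefully. A secondary subtlety is the $\NCRing[R]{i}$ case: one has to make sure that the bounded aggregators $\sumib, \prodib$ suffice to iterate over the at-most-two predecessors of a fan-in-$2$ gate (which they do, since the predecessor relation is itself a function in $\Arb{R}$ and there are only two values to sum or multiply), while on the converse side ensuring that the bounded variant of the recursive aggregation never forces fan-in greater than $2$ in the synthesized circuit.
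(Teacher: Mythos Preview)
Your proposal follows essentially the same strategy as the paper: the circuit-to-logic direction uses the normal form from Lemma~\ref{lemma:normal_form_seq_d} together with $\Arb{R}$-encoded circuit relations and a single $\GFRi$ recursion walking along the sequence $\d$, and the logic-to-circuit direction unfolds the recursion using Theorem~\ref{theorem:FOArb} and bounds the unfolding depth via Lemma~\ref{lemma:gfr-yields-logarithmic-depth}.

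There is, however, a genuine gap in your treatment of the $\NCRing[R]{i}$ direction. You write that the $\GFRbi$ condition ``forbids unbounded aggregations in recursive subterms'' and conclude that ``every arithmetic gate in the resulting circuit has fan-in at most~$2$.'' This is not correct: the ambient logic is still $\FORing{R}[\Arb{R}]+\SUM+\PROD$, so the number term $t$ defining $f$ may contain unbounded $\sumi$ and $\prodi$ outside the recursive calls, and these must be translated into fan-in-$2$ trees of logarithmic depth (i.e.\ $\NCRing[R]{1}$ subcircuits), not constant-depth circuits. If you simply stack such $\NCRing[R]{1}$ subcircuits through $\bO((\log_2 n)^i)$ levels of recursion you obtain depth $\bO((\log_2 n)^{i+1})$, which only gives $\NCRing[R]{i+1}$. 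The missing observation, which the paper makes explicit, is that the $\GFRbi$ condition guarantees that every occurrence of $f$ lies only under \emph{bounded} aggregators (and under no unbounded aggregator or quantifier); hence within each $\NCRing[R]{1}$ subcircuit for $t$, the gates labelled $f(\ol{a},\ol{b})$ sit at \emph{constant} depth. Because the relativizing formula $\xi$ is input-independent, one can hard-wire which two circuits are attached, and the recursion therefore contributes only $\bO((\log_2 n)^i)$ to the total depth, with a single additional $\bO(\log_2 n)$ for the outer formula $\psi$.
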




\begin{proof}
    We start by showing the inclusions of the circuit classes in the respective logics and will then proceed with the converse directions.\medskip
    
    \noindent{}\emph{Step 1:} $\ACRing[R]{i} \subseteq \FORing{R}[\Arb{R}] + \SUM + \PROD + \GFRi$:
    
    \noindent{}Let $L \in \ACRing[R]{i}$ via the nonuniform circuit family $\cC = (C_n)_{n \in \N}$ and let the depth of $\cC$ be bounded by $c \cdot (\log_2 n)^i$.
    We construct a $\FORing{R}[\Arb{R}] + \SUM + \PROD + \GFRi$ sentence $\varphi$ defining $L$. 
    As the circuit input is interpreted as a circuit structure, the signature $\sigma$ of $\varphi$ contains only the single unary function symbol $f_{\mathrm{element}}$. 
    
    We define the following additional relations and functions which will essentially encode our given circuits. 
    We will have access to them because of the $\Arb{R}$ extension of our logic and we use relations here instead of functions for ease of reading, since we essentially have access to relations in functional structures if we consider the respective characteristic functions of the relations instead.
    \begin{itemize}\label{itemize:circuit_relations_and_function}
        \item $G_+(\ol{x})\iff\ol{x}$ is an addition gate.
        \item $G_\times(\ol{x})\iff\ol{x}$ is a multiplication gate.
        \item $G_<(\ol{x})\iff\ol{x}$ is a $<$-gate, the left predecessor of which is lexicographically lower than the right predecessor. 
        \item $G_\mathrm{input}(\ol{x}) \iff$ $\ol{x}$ is an input gate.
        \item $G_E(\ol{x}, \ol{y})\iff\ol{y}$ is a successor gate of $\ol{x}$.
        \item $G_\mathrm{output}(\ol{x})\iff\ol{x}$ is the output gate.
        \item $G_\mathrm{const}(\ol{x})\iff\ol{x}$ is a constant gate.
        \item $f_\mathrm{const\_val}(\ol{x}) = y \in R$ $\iff$ $y$ is the value of $\ol{x}$ if $\ol{x}$ is a constant gate and $y = 0$ otherwise. 
    \end{itemize}
    Without loss of generality let all circuits of $\cC$ be in the normal form of Lemma~\ref{lemma:normal_form_seq_d} and let the numbering of $C_n$ be such that the last digit of the number of the $j$th input gate is $j$ for $1 \leq j \leq n$.
    Recall that this means that for each gate number of a gate $g$ represented as a tuple $\ol{g}$, the first $c \cdot i$ elements of $\ol{g}$ encode the  $\textit{depth}(g)$th element of $d(n, c, i)$.
    The following sentence $\varphi$ defines $L$:
    \[
        \varphi \coloneqq [f(\ol{y}) \equiv t(\ol{y}, f)] \exists \ol{a}\; G_\mathrm{output}(\ol{a}) \land f(\ol{a}) = 1
    \]
    %
    %
    where $t$ is defined as follows (with $\ol{z_j}$ denoting the $j$th $c$-long subtuple in the $c \cdot i$ long prefix of $\ol{z}$, which, as per the normal form of Lemma~\ref{lemma:normal_form_seq_d} encodes the $j$th tuple of an element of $\d(n, c, i)$):
    \[\arraycolsep=1.4pt\def\arraystretch{2.5}
    \begin{array}{rrll}
         t(\ol{y}, f) \coloneqq & \chi[G_+(\ol{y})] & \times &\sumi[\ol{z}].\left(\bigvee\limits_{j = 1}^i\left( \ol{z_j} \leq \ol{y_j} / 2 \land \bigwedge\limits_{k = 1}^{j - 1}\ol{z_k} \leq \ol{y_k}\right) \land G_\mathrm{E}(\ol{y}, \ol{z})\right) f(\ol{z})\ +\\
         & \chi[G_\times(\ol{y})] & \times & \prodi[\ol{z}].\left(\bigvee\limits_{j = 1}^i\left( \ol{z_j} \leq \ol{y_j} / 2 \land \bigwedge\limits_{k = 1}^{j - 1}\ol{z_k} \leq \ol{y_k}\right) \land G_\mathrm{E}(\ol{y}, \ol{z})\right) f(\ol{z})\ +\\
         & \chi[G_<(\ol{y})]&\times & \maxi[\ol{z}].\left(\bigvee\limits_{j = 1}^i\left( \ol{z_j} \leq \ol{y_j} / 2 \land \bigwedge\limits_{k = 1}^{j - 1}\ol{z_k} \leq \ol{y_k}\right) \land G_\mathrm{E}(\ol{y}, \ol{z})\right) \\
         & & &\Bigg(\maxi[\ol{b}].\left(\bigvee\limits_{j = 1}^i\left( \ol{b_j} \leq \ol{y_j} / 2 \land \bigwedge\limits_{k = 1}^{j - 1}\ol{b_k} \leq \ol{y_k}\right) \land G_\mathrm{E}(\ol{y}, \ol{b}) \land \ol{b} < \ol{z})\right)\\
         & & &\left( \chi[f(\ol{b}) < f(\ol{z})]\right) \Bigg)\ +\\
         & \chi[G_\mathrm{input}(\ol{y})] & \times & {f_\mathrm{element}(y_{\abs{\ol{y}}})}\ +\\
         & \chi[G_\mathrm{const}(\ol{y})] & \times & f_\mathrm{const\_val}(\ol{y})\ +\\
         & \chi[G_\mathrm{output}(\ol{y})] & \times & \sumi[\ol{z}].\left(\bigvee\limits_{j = 1}^i\left( \ol{z_j} \leq \ol{y_j} / 2 \land \bigwedge\limits_{k = 1}^{j - 1}\ol{z_k} \leq \ol{y_k}\right) \land G_\mathrm{E}(\ol{y}, \ol{z})\right) f(\ol{z}).
    \end{array}
    \]
    
    
    
    Here, the relations $G_{g}(\ol{y})$ for $g \in \{+, \times, <, \textnormal{input}, \textnormal{const}, \textnormal{output}\}$ give information about the gate type of the gate encoded by $\ol{y}$ and are provided by the $\Arb{R}$-extension of $\FORing{R}$.
    They are interpreted as mentioned above.
    The \BIT predicate is provided in the same way.
    
    We will now prove that $\varphi$ does indeed define $L$.
    Let $\ol{a} \in R^n$ be the input to $C_n$.
    We will show that for all $\ol{g} \in R^l$, where $l$ is the encoding length of a gate in $\cC$, the value of the gate encoded by $\ol{g}$ in the computation of $C_n$ when $C_n$ is given $\ol{a}$ as the input is $f(\ol{g})$. 
    Let $g$ be the gate encoded by $\ol{g}$.
    We will argue by induction on the depth of the gate $g$, i.\,e., by the distance between $g$ and an input gate.
    
    $d = 0$: 
    If $d = 0$, then $g$ is an input gate. 
    Therefore, the only summand in $t(\ol{g}, f)$ that is not trivially $0$ is the fourth one, which is equal to $f_\mathrm{element}(g_{\abs{\ol{g}}})$ (which is the value of the $g_{\abs{\ol{g}}}$th input gate).
    
    $d \to d+1$:
    Since $d + 1 > 0$, $g$ is not an input gate.
    This means that there are the following 5 possibilities for $g$:
    \begin{enumerate}
        \item $g$ is an addition gate: 
        In that case, the only summand in $t(\ol{g}, f)$ which is not trivially $0$ is the first one. 
        All predecessors of $g$ have a number, the first $c \cdot i$ digits of which are the successor of the first $c \cdot i$ digits of $g$ in the sequence $\d(n, c, i)$ because of the normal form of Lemma~\ref{lemma:normal_form_seq_d}.
        Additionally, the relativization ensures that the gate encoded by $\ol{z}$ in the respective summand has exactly the successor in the sequence $\d(n, c, i)$ of $g$'s first $c \cdot i$ digits in its first $c \cdot i$ digits.
        This means that by the induction hypothesis
        \[
            \sumi[\ol{z}].\left(\bigvee\limits_{j = 1}^i\left( \ol{z_j} \leq \ol{y_j} / 2 \land \bigwedge\limits_{k = 1}^{j - 1}\ol{z_k} \leq \ol{y_k}\right) \land G_\mathrm{E}(\ol{y}, \ol{z})\right) f(\ol{z})
        \]
        yields exactly the sum of all the values of predecessor gates of $g$.
        \item $g$ is a multiplication gate:
        Analogously to the above case, 
        \[
            \prodi[\ol{z}].\left(\bigvee\limits_{j = 1}^i \left( \ol{z_j} \leq \ol{y_j} / 2 \land \bigwedge\limits_{k = 1}^{j - 1}\ol{z_k} \leq \ol{y_k}\right) \land G_\mathrm{E}(\ol{y}, \ol{z})\right) f(\ol{z})
        \]
        yields exactly the product of all predecessor gates of $g$.
        \item $g$ is a $<$ gate: 
        In that case, the relativization
        \[
            \maxi[\ol{z}].\left(\bigvee\limits_{j = 1}^i\left( \ol{z_j} \leq \ol{y_j} / 2 \land \bigwedge\limits_{k = 1}^{j - 1}\ol{z_k} \leq \ol{y_k}\right) \land G_\mathrm{E}(\ol{y}, \ol{z})\right)
        \]
        makes sure that $\ol{z}$ is the maximum gate number of a predecessor of $g$ and
        \[
            \Bigg(\maxi[\ol{b}].\left(\bigvee\limits_{j = 1}^i\left( \ol{b_j} \leq \ol{y_j} / 2 \land \bigwedge\limits_{k = 1}^{j - 1}\ol{b_k} \leq \ol{y_k}\right) \land G_\mathrm{E}(\ol{y}, \ol{b}) \land \ol{b} < \ol{z})\right) \left( \chi[f(\ol{b}) < f(\ol{z})]\right) \Bigg)
        \]
        makes sure that $\ol{b}$ is the gate number of the other predecessor of $g$ and that therefore $\chi[f(\ol{b}) < f(\ol{z})]$ is the value of $t(\ol{g}, f)$, which is exactly the value of $g$ in the computation of $C_n$.
        \item $g$ is a constant gate:
        Since $\varphi_{\mathrm{const\_val}}(\ol{g})$ returns exactly the value of $g$, and that value is taken by $t(\ol{g}, f)$.
        \item $g$ is the output gate: 
        In that case, $g$ only has one predecessor and thus
        \[
            \sumi[\ol{z}].\left(\bigvee\limits_{j = 1}^i\left( \ol{z_j} \leq \ol{y_j} / 2 \land \bigwedge\limits_{k = 1}^{j - 1}\ol{z_k} \leq \ol{y_k}\right) \land G_\mathrm{E}(\ol{y}, \ol{z})\right) f(\ol{z})
        \]
        ensures that $g$ takes the value of that predecessor, since there is only one element matching the relativization.
    \end{enumerate}
    Finally, by
    \[
        \varphi \coloneqq [f(\ol{y}) \equiv t(\ol{y}, f)] \exists \ol{a}\; G_\mathrm{output}(\ol{a}) \land f(\ol{a}) = 1
    \]
    we make sure that there exists an output gate which has the value $1$ at the end of the computation.\medskip
    
    
    
    \noindent{}\emph{Step 2:} $\NCRing[R]{i} \subseteq \FORing{R}[\Arb{R}] + \SUM + \PROD + \GFRbi$:
    
    \noindent{}The proof for this inclusion follows in the the same as the proof for the $\ACRing[R]{i}$-case, by just replacing all guarded aggregations in the formulae by bounded guarded aggregations.\medskip
    
    \noindent{}\emph{Step 3:} $\FORing{R}[\Arb{R}] + \SUM + \PROD + \GFRi \subseteq \ACRing[R]{i}$:
    
    \noindent{}Let $L \in \FORing{R}[\Arb{R}] + \SUM + \PROD + \GFRi$ via a formula $\varphi$ over some signature $\sigma = (L_s, L_f, L_a)$ and let there be only one occurrence of a $\GFRi$ operator in $\varphi$. 
    This proof easily extends to the general case.
    This means that $\varphi$ is of the form 
    \[
        \varphi = [f(\ol{x}, \ol{y_1}, \dots, \ol{y_i}, \ol{y_{i+1}}) \equiv t(\ol{x},\ol{y_1}, \dots, \ol{y_i}, \ol{y_{i+1}},f)] \psi(f).
    \]
    We now construct an $\ACRing[R]{i}$ circuit family $\cC = (C_n)_{n \in \N}$ deciding $L$.
    
    We first construct an $\ACRing[R]{i}$ family evaluating $\varphi$ without the occurrences of $f$ as in Theorem~\ref{theorem:FOArb}.
    This is possible, since $\varphi$ is an $\FORing{R}[\Arb{R}] + \SUM + \PROD$ formula over $(L_s, L_f, L_a \cup \{f\})$ and by Theorem~\ref{theorem:FOArb}, there is such a circuit family for $\varphi$.
    
    Next, we explain how we build an $\ACRing[R]{i}$ circuit family for the whole formula $\varphi$ from this point.
    For this, we need to construct an $\ACRing[R]{i}$ circuit family computing the value of $f(\ol{x}, \ol{y})$ for each pair $\ol{a},\ol{b}$ with $\ol{a} \in A^k$ for some $k \in \N$ and $\ol{b} \in A^i$, where the length of the tuple $\ol{b}$ is exactly the exponent of the polylogarithm bounding the circuit depth.
    Notice, that $t$ is an $\FORing{R}[\Arb{R}] + \SUM + \PROD + \GFRi$ number term and can therefore be evaluated by an $\ACRing[R]{0}$ circuit family, except for the occurrences of $f$.
    We now obtain $C_n$ by taking the $n$th circuit of all those (polynomially many) $\ACRing[R]{0}$ circuit families and for all $\ol{a}, \ol{b}$ replacing the gate labeled $f(\ol{a}, \ol{b})$ by the output gate of the circuit computing $t(\ol{a}, \ol{b}, f)$.
    
    Since all occurrences of $f(\ol{x}, \ol{y})$ are in the scope of a guarded aggregation 
    \begin{align*}
        A_{\ol{z_1}, \dots, \ol{z_i}}.(\bigvee\limits_{j = 1}^i \ol{z_j} \leq \ol{y_j} / 2 \land \bigwedge\limits_{k = 1}^{j - 1} \ol{z_k} \leq \ol{y_k} \land G_\mathrm{E}(\ol{y_1}, \dots, \ol{y_i}, \ol{y_{i+1}}, \ol{z_1}, \dots, \ol{z_i})),
    \end{align*}
    the number of steps from any $f(\ol{a}, \ol{b})$ before reaching $f(\ol{a}, \ol{0})$, terminating the recursion, is bounded by $\bO((\log_2 n)^i)$ as per Lemma~\ref{lemma:gfr-yields-logarithmic-depth}.
    
    Since each such step -- computing $f(\ol{a}, \ol{b})$, when given values of the next recursive call $f(\ol{c}, \ol{d})$  -- is done by an $\ACRing[R]{0}$ circuit and therefore has constant depth, in total, any path from the first recursive call to the termination has length in $\bO((\log_2 n)^i)$. 
    Since the starting circuit deciding $\varphi$ had constant depth, the circuit we constructed now has polylogarithmic depth in total.
    And given that we only added polynomially many subcircuits with polynomially many gates each, the whole circuit is an $\ACRing[R]{i}$ circuit deciding $\varphi$.
    
    For the general case of several \GFRi operators, we construct a circuit for each operator in the same way and connect them to the circuit evaluating $\varphi$.\medskip
    
    \noindent{}\emph{Step 4:} $\FORing{R}[\Arb{R}] + \SUM + \PROD + \GFRbi \subseteq \NCRing[R]{i}$:
    
    \noindent{}This case can be proven analogously to the case for $\ACRing[R]{i}$.
    Instead of $\ACRing[R]{0}$ families for evaluating $\varphi$ and $t$, we now need to use $\NCRing[R]{1}$ families. 
    With this, we have logarithmic depth for evaluating $t$, which would generally be a problem, since repeating this $(\log_2 n)^i$ times would yield a $\NCRing[R]{i+1}$ family. 
    However, by definition, there are no occurrences of $f$ in the scope of unbounded aggregators or quantifiers.
    For the bounded aggregators, we still construct circuits for all possible values of the aggregated variables, but we only connect the output gates of the maximum two circuits satisfying the relativization.
    We can do this, since, as $\xi$ does not contain function or relation symbols given in the structure, we can predetermine which circuits will match the relativization and therefore hardwire the connections.
    In general, $\FORing{R}[\Arb{R}] + \SUM + \PROD$ formulae without unbounded quantifiers or aggregators can be evaluated in $\NCRing[R]{0}$.
    Therefore, the gates marked $f(\ol{a}, \ol{b})$ only occur at constant depth in the subcircuits for $t(\ol{a}, \ol{b}, f)$. This means, that in total, the construction leads to a depth in $\bO((\log_2 n)^i)$.
\end{proof}

As mentioned previously, the basis of the idea for guarded functional recursion was the guarded predicative recursion used for plain first-order logic~\cite{10.1145/3209108.3209179}.
The same extension to polylogarithmic recursion depth that was showcased in this paper for $\GFR$ can be applied to $\GPR$, yielding the following results.
(A formal definition of $\GPR^i$ is presented in the appendix as Definition~\ref{definition:GPR}).

\begin{corollary}
\phantom{a}
\begin{enumerate}
    \item $\FO[\mathrm{Arb}] + \GPR^i = \AC^i$
    \item $\FO[\mathrm{Arb}] + \GPR_{\mathrm{bound}}^i = \NC^i$
\end{enumerate}
\end{corollary}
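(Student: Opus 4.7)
The plan is to mirror the proof of the preceding theorem in the Boolean setting, using the fact that Boolean circuits are algebraic circuits over $R = \Z_2$. The operator $\GPR^i$ is obtained from the guarded predicative recursion of Durand, Haak and Vollmer by replacing the single halved tuple by $i$ lexicographically ordered tuples, with the same guarded aggregation pattern
\[
  \bigvee\limits_{j=1}^i\left(\ol{z_j} \leq \ol{y_j}/2 \land \bigwedge\limits_{k=1}^{j-1} \ol{z_k} \leq \ol{y_k}\right)
\]
used in Definition~\ref{definition:gfr}. The analogue of Lemma~\ref{lemma:gfr-yields-logarithmic-depth} then shows that any $\GPR^i$ recursion terminates after at most $(\lfloor \log_2 n\rfloor + 2)^i - 1$ steps, since the proof of that lemma depends only on the combinatorics of the guard relation, not on the numerical sort.

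For the inclusion $\AC^i \subseteq \FO[\mathrm{Arb}] + \GPR^i$, I would first invoke Lemma~\ref{lemma:balanced_dag} and Lemma~\ref{lemma:normal_form_seq_d} to bring a given $\AC^i$-circuit family into the normal form in which every gate number has the depth-encoding prefix counting down along the sequence $\d(n,c,i)$. Then, exactly as in Step~1 of the theorem proof, build a formula
\[
  \varphi \coloneqq [P(\ol{y}) \equiv \psi(\ol{y}, P)]\, \exists \ol{a}\, G_{\mathrm{output}}(\ol{a}) \land P(\ol{a}),
\]
where $\psi$ performs a case distinction on the gate type of $\ol{y}$ (using the $\mathrm{Arb}$-oracle relations $G_\land, G_\lor, G_\neg, G_{\mathrm{input}}$), and where the recursive calls $P(\ol{z})$ for the predecessors of $\ol{y}$ are placed inside a guarded aggregation of exactly the form required by $\GPR^i$. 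Correctness follows by the same structural induction on gate depth as in the theorem; the normal form guarantees that the guarded aggregation always captures precisely the predecessors of $\ol{y}$.

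For the converse $\FO[\mathrm{Arb}] + \GPR^i \subseteq \AC^i$, I would proceed as in Step~3 of the theorem: without loss of generality assume a single $\GPR^i$ occurrence, use Immerman's classical theorem $\AC^0 = \FO[\mathrm{Arb}]$ to build a constant-depth circuit evaluating the body of the recursion relative to an oracle for $P$, and then substitute recursive copies at each gate labelled $P(\ol{a}, \ol{b})$. The depth of the resulting circuit is the $\AC^0$-depth of the body times the recursion depth, and the latter is $\bO((\log_2 n)^i)$ by the analogue of Lemma~\ref{lemma:gfr-yields-logarithmic-depth}. The size stays polynomial because there are only polynomially many possible argument tuples. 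The $\NC^i$ case is the same construction, but the bounded guarded aggregation allows at most two recursive calls per step, so the bounded fan-in $\NC^i$-circuits suffice; here one uses the fact that the guard $\xi$ contains no input relations, so the matching predecessors can be hardwired.

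The only non-routine point is the depth accounting in the $\NC^i$ direction, since naively composing $\NC^1$-subcircuits for the body of the recursion $(\log_2 n)^i$ times would give $\NC^{i+1}$. As in Step~4 of the theorem proof, this is avoided by observing that occurrences of the recursive symbol appear only outside of unbounded quantifiers/aggregators, so the part of the body that actually contributes to the composed depth is $\NC^0$ rather than $\NC^1$. Everything else is a direct transcription of the algebraic argument into the Boolean case.
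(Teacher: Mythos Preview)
Your proposal is correct and matches the paper's intent: the paper gives no explicit proof of this corollary, merely remarking that ``the same extension to polylogarithmic recursion depth that was showcased in this paper for $\GFR$ can be applied to $\GPR$,'' and your outline does exactly that transcription. One small caveat: your opening sentence suggests instantiating the theorem at $R=\Z_2$, but the theorem is stated only for infinite integral domains, so you should phrase it (as you in fact do in the body) as rerunning the \emph{argument} of Steps~1--4 with predicates in place of number terms and quantifiers in place of aggregators, rather than as a literal specialization.
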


\section{Relationship between versions of \texorpdfstring{$\ACRing[R]{0}$}{AC\textasciicircum{0}\_{R}} over different integral domains}\label{section:relationship-regarding-rings}

Intuitively, a circuit deciding a problem in $\ACRing[\R]{0}$ should also be able to simulate circuits deciding problems in, for example, $\ACRing[\Z]{0}$ and $\ACRing[\Z_2]{0}$. Furthermore, we should be able to simulate an $\ACRing[\Z_3]{0}$-circuit by an $\ACRing[\Z_2]{0}$-circuit by simulating the operations defined in the integral domain $\Z_3$ by operations of tuples of $\Z_2$. To formalise this intuition, we propose the notion of \emph{$\mathfrak{C}$-simulation maps}.

\begin{definition}
    Let $f_1, f_2 \colon \N \to \N$ be two functions and let $\mathfrak{C}$ be a complexity class with $\mathfrak{C} \in \{\ts{}(f_1, f_2), \sd{}(f_1, f_2), \usd{}(f_1, f_2)\}$.
    
    A \emph{$\mathfrak{C}$-simulation map} from an integral domain $R_1$ to an integral domain $R_2$ is an injective function $f \colon R_1^* \to R_2^*$ such that the following holds:
    
    For all $k \in \N_{>0}$ and $A \in \mathfrak{C}_{R_1^{k}}$ there exists an $\ell \in \N_{>0}$ and a language $B \in \mathfrak{C}_{R_2^{\ell}}$ such that for all $\ol{x} = (x_1, x_2, \dots, x_{\abs{x}}$) with $x_i \in R_1^k$ for all $i$:
    \begin{align*}
        \ol{x} \in A \iff f(\ol{x}) = \left(f(x_1), f(x_2), \dots, f(x_{\abs{x}})\right) \in B.
    \end{align*}
    If there exists a $\mathfrak{C}$-simulation map from an integral domain $R_1$ to an integral domain $R_2$, we also write $\mathfrak{C}_{R_1} \simsubseteq \mathfrak{C}_{R_2}$. The relations $\simequal$ and $\simsubsetneq$ are defined analogously.
\end{definition}

Similar to the formalism of reductions in classical complexity theory, the relation induced by $\simsubseteq^\mathfrak{C}$ is reflexive and transitive.
Since in this work, the main focus is on the $\ACRing[R]{}$ and $\NCRing[R]{}$ hierarchies, we will proceed to restrict ourselves to these classes. 
But note that the simulation methods we show trivially extend to larger complexity classes.


The formalism essentially divides our integral domains in a three-tier hierarchy.
The first tier in this hierarchy consists of the complexity classes over finite integral domains, the second tier consists of the classes over integral domains which are simulatable by $\Z$ and the third tier consists of classes over integral domains which are simulatable by $\R$.
In this setting, a complexity class over a certain integral domain is able to simulate the complexity classes over integral domains in the same tier or below. 
To make this explicit, we show the following three equalities:

\begin{theorem}\label{theorem:nc-sim}
    The following three equations hold:
    \begin{enumerate}
        \item $\NCRing[\F_p]{i} \simequal \NCRing[\F_q]{i}$ for all prime powers $p, q$ and $i\in \mathbb{N}$.
        \item $\NCRing[\Z]{i} \simequal \NCRing[{\Z[j_{k}]}]{i}$ for all $i\in \mathbb{N}$.
        \item $\NCRing[\R]{i} \simequal \NCRing[{\R[j_{k}]}]{i}$ for all $i\in \mathbb{N}$.
    \end{enumerate}
\end{theorem}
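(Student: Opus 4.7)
The plan is to construct, for each of the six required inclusions $\simsubseteq$, a simulation map by encoding elements of the richer (or more structured) integral domain as constant-length tuples of elements of the other and then replacing every gate of a circuit by a constant-size, bounded fan-in subcircuit implementing the corresponding operation. Since every gadget is of constant depth and constant size, both the polynomial size bound and the depth bound $O((\log n)^i)$ are preserved, and bounded fan-in is maintained, so the $\NCRing[]{i}$ level is respected throughout.

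For statement (1), I fix any injective encoding $e\colon \F_p \to \F_q^{\ell}$ with $\ell = \lceil \log p / \log q \rceil + 1$ and extend $e$ coordinatewise to a simulation map $f\colon \F_p^{*} \to \F_q^{*}$. Given $(C_n) \in \NCRing[\F_p]{i}$, I replace each $\F_p$-gate by a hardwired $\F_q$-gadget realising the corresponding operation on encoded inputs. Because both fields are finite, addition, multiplication and the chosen strict total order $<$ are finite total functions, so each gadget is a bounded fan-in circuit of constant depth and size obtained by case analysis on the at most $p^2$ input pairs. The reverse direction is completely symmetric.

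For (2) and (3), the inclusions $\NCRing[\Z]{i}\simsubseteq \NCRing[{\Z[j_k]}]{i}$ and $\NCRing[\R]{i}\simsubseteq \NCRing[{\R[j_k]}]{i}$ are immediate via the canonical embedding $f(a) = a$: every $R$-circuit is already an $R[j_k]$-circuit whose constants and inputs lie in the subring $R$. For the reverse inclusions I encode $z = \sum_{m=0}^{k-1} a_m j_k^m$ as $(a_0,\ldots,a_{k-1}) \in R^k$, giving $\ell = k$. Each gate of an $R[j_k]$-circuit is then replaced as follows: an addition becomes $k$ parallel $R$-additions on matching coordinates; a multiplication becomes the constant-depth subcircuit that computes each output coefficient $c_m$ using $j_k^{k}=-1$, which requires only $O(k^2)$ bounded fan-in $R$-operations (constant, since $k$ is fixed); a $<$-gate becomes the constant-depth subcircuit realising the lexicographic order on the coordinate tuples, extending the construction the paper gives for $\C = \R[i]$ to arbitrary $k$.

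The main obstacle I expect is the comparison gate: unlike $+$ and $\times$, the order on $R[j_k]$ does not act coordinatewise, and the paper only spelled it out for $\C$. I need to verify that the strict total order chosen on $R[j_k]$ is indeed expressible as a Boolean combination of constantly many applications of $<_R$ together with equality tests on coordinates; the lexicographic extension of $<_R$ to $R^k$ is such an order, and equality is definable from $<$ via $\lnot(x<y)\land\lnot(y<x)$, so the comparison gadget fits in constant depth with bounded fan-in. Once this is checked, the same gate-replacement template yields all six inclusions and the three equalities follow.
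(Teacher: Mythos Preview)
Your proposal is correct and follows essentially the same approach as the paper: encode elements of one domain as constant-length tuples over the other and replace each gate by a constant-size bounded-fan-in gadget, treating the finite and infinite cases separately. You are in fact more careful than the paper's own proof in explicitly addressing how the $<$-gate is simulated via the lexicographic order on coordinate tuples, a point the paper's argument leaves implicit.
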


\begin{proof}
    We split the proof in two main parts: the finite case, in which simulations of integral domains are trivial, and the infinite case, where we have to be more careful about the new operations our simulating circuits execute to uphold that the structure the circuits simulate are still integral domains.
    
    \emph{The finite case.} The simulation of finite integral domains is quite trivial. For $\NCRing[\F_p]{i} \simsubseteq \NCRing[\F_q]{i}$ where $p > q$, choose the length of tuples that the $\NCRing[\F_q]{i}$-circuit uses to be the smallest $k\in \mathbb{N}$ such that $p \le q^k$ holds. Map the $p$ elements to the first $p$ elements in $\F_q^k$ and define addition an multiplication tables of constant size which simulate the addition and multiplication in $\F_p$.
    
    \emph{The infinite case.} We will prove that $\NCRing[\Z]{i} \simequal \NCRing[{\Z[j_{k}]}]{i}$ for any fixed $k\in \mathbb{N}$. Note that the following proof does not depend on the countability of the set, so the proof for the uncountable case runs analogously. The direction $\NCRing[\Z]{i} \simsubseteq \NCRing[{\Z[j_{k}]}]{i}$ is trivial. For the other direction, note that when simulating infinite integral domains, we can no longer hard-code the addition and multiplication tables in a trivial way. Furthermore, integral domains are not closed under cartesian products, since for two integral domains $R_1, R_2$, we have for $R_1 \times R_2$ that $(1, 0)\times (0, 1) = (0, 0)$ using componentwise multiplication. We fix this by still using $k$-tuples to emulate the adjoint elements and using componentwise addition, but adapting multiplication so that it simulates multiplication of two elements with adjoint elements, where the $m$-th index of the tuple stands for the $m$-th power of the adjoint element, which we call $j$ here. Explicitly, we use the integral domain $(\Z^k, +_{\Z^k}, \times_{\Z^k})$, where $+_{\Z^k}$ is componentwise addition, and $\times_{\Z^k}$ is defined as follows:
    
    If we want to multiply two tuples of length $k$
    \begin{align*}
        \left(x_1, x_2, \dots, x_k\right) \times_{\Z^k} (y_1, y_2, \dots, y_k),
    \end{align*}
    we simulate the multiplication in the original, adjoint integral domain
    \begin{align*}
        \left(x_1 + x_2 j + \cdots + x_k j^{k - 1}\right) \times_{\Z[j_{k}]} (y_1 + y_2 j + \cdots + y_k j^{k - 1}),
    \end{align*}
    by constructing the following matrix of constant size which corresponds to expanding the multiplication:
    \begin{align*}
        A = (a_{uv}) = \begin{pmatrix}
                x_1 y_1 & x_1 y_2 j & \cdots & x_1 y_k j^{k-1}\\
                x_2 y_1 j & x_2 y_2 j^2 & \cdots & x_2 y_k j^k (= -x_2 y_k)\\
                \vdots & \ddots &  & \\
                x_k y_1 j^{k-1} & x_k y_2 j^k (= -x_k y_2) & \cdots & x_k y_k j^{2k-2}\\
            \end{pmatrix}.
    \end{align*}
    Observe that, due to the fact that $j^k = -1$, every entry $a_{uv}$ of the matrix contributes to the term at index $(u + v - 2) \mod k$ in the tuple. The entry of the resulting tuple at index $\ell$ is thus
    \[
        \sum_{\substack{1 \le u, v\le k\\(u + v - 2) \mod k = \ell}} a_{uv}.\qedhere
    \]
\end{proof}

\begin{theorem}\label{theorem:ac-sim}
    The following three equations hold:
    \begin{enumerate}
        \item $\ACRing[\F_p]{i} \simequal \ACRing[\F_q]{i}$ for all prime powers $p, q$ and $i\in \mathbb{N}$.
        \item $\ACRing[\Z]{i} \simequal \ACRing[{\Z[j_{k}]}]{i}$ for all $i\in \mathbb{N}$.
        \item $\ACRing[\R]{i} \simequal \ACRing[{\R[j_{k}]}]{i}$ for all $i\in \mathbb{N}$.
    \end{enumerate}
\end{theorem}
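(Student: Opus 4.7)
The plan is to follow the same tiered structure as the NC version in Theorem \ref{theorem:nc-sim}, splitting into the finite and infinite cases. In all three claims, the direction $\ACRing[R]{i} \simsubseteq \ACRing[R[j_{k}]]{i}$ (and the analogue for the finite case) is immediate: the inclusion $R \hookrightarrow R[j_{k}]$ realizes any $R$-circuit as an $R[j_{k}]$-circuit directly, so the identity encoding already witnesses the simulation. The substantive work is in the converse simulations.

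For the finite case $\ACRing[\F_p]{i} \simequal \ACRing[\F_q]{i}$, I would take $k$ minimal with $q^k \geq p$, injectively encode $\F_p$ as tuples in $\F_q^k$, and hardwire the finite addition and multiplication tables of $\F_p$ as gadgets over $\F_q$. Since $p$ and $q$ are both constants, these gadgets have constant size, depth, and fan-in, and therefore slot into an AC circuit just as readily as into an NC circuit. For fan-in-$m$ gates in the source, the gadget is invoked alongside the ambient unbounded-fan-in $\F_q$-gates to aggregate the $k$ tuple components in parallel.

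For the infinite case, I would encode each element of $R[j_{k}]$ as a $k$-tuple in $R^k$ and perform addition componentwise, so that a fan-in-$m$ $R[j_{k}]$-addition becomes $k$ parallel fan-in-$m$ $R$-additions; this preserves depth and multiplies size by $k$. Binary multiplication is handled by the constant-size matrix gadget of Theorem \ref{theorem:nc-sim}, which after reduction modulo $j^k + 1$ still yields a $k$-tuple in $R^k$.

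The main obstacle is unbounded-fan-in multiplication over $R[j_{k}]$. A naive expansion of an $m$-ary product of polynomials of degree $k-1$ yields $k^m$ monomials, which is super-polynomial when $m$ is itself polynomial in $n$, so a single level of unbounded-fan-in matrix expansion is ruled out. I would decompose each such gate into a balanced binary tree of $R[j_{k}]$-multiplications of depth $O(\log n)$, applying the matrix gadget at each internal node. The delicate point — where I expect most of the effort to go — is verifying that the resulting simulating circuit still lies in $\ACRing[R]{i}$: this requires either a preliminary normal-form argument reducing any source $\ACRing[R[j_{k}]]{i}$-circuit to one with bounded-fan-in multiplication without altering its asymptotic depth, or a direct global depth accounting showing that the logarithmic blow-up per multiplication layer can be absorbed into the target depth bound $O((\log n)^i)$ by compressing adjacent multiplicative layers before the tree decomposition is applied.
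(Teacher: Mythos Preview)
Your treatment of the finite case and of unbounded addition matches the paper. The divergence is at unbounded multiplication, and there your proposal has a genuine gap.

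You propose to replace each unbounded $R[j_k]$-product by a balanced binary tree of binary products, incurring depth $O(\log n)$ per gate, and then hope to absorb this blow-up either via a normal form with bounded-fan-in multiplication or by ``compressing adjacent multiplicative layers''. Neither fix is available. The normal-form route amounts to showing that every $\ACRing[R[j_k]]{i}$ circuit can be rewritten with only bounded-fan-in multiplication at the same asymptotic depth; for $i=0$ this would collapse unbounded-fan-in multiplication into constant depth with bounded fan-in, which there is no reason to expect over an arbitrary integral domain. The compression idea is too vague to carry weight, and in any case a single unbounded $\times$-gate at depth~$1$ in an $\ACRing[R[j_k]]{0}$ circuit already becomes depth $\Theta(\log n)$ under your simulation, landing you in $\ACRing[R]{1}$ rather than $\ACRing[R]{0}$. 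So the argument as written does not establish item~(2) or~(3) for $i=0$.

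The paper takes a different route that avoids the depth blow-up entirely: rather than iterating the binary gadget, it observes that for an $m$-ary product of $k$-tuples one can \emph{precompute}, for each output index $\ell$, exactly which monomials $\prod_{s} x_{s,a_s}$ (with appropriate sign from the reduction $j^k=-1$) contribute to coordinate $\ell$, and then realise each such monomial by a single unbounded-fan-in $\times$-gate over $R$ with $m$ predecessors, summing the results with an unbounded-fan-in $+$-gate. This keeps the simulation at constant depth per source gate. The paper notes this comes at the cost of a blow-up in wire complexity; whether one regards that as fully satisfactory is a separate matter, but the conceptual point you are missing is that the simulation of an unbounded $\times$-gate must itself be done in constant depth using the unbounded fan-in available in the target model, not by falling back to a logarithmic-depth tree.
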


\begin{proof}
    We use the strategy from the proof of Theorem~\ref{theorem:nc-sim}, and show that unbounded addition and multiplication can be realized without a significant increase in the complexity.
    
    For $n$ given tuples of length $k$
    \begin{align*}
        (x_{1, 1}, x_{1, 2}, \dots, x_{1, k}), (x_{2, 1}, x_{2, 2}, \dots, x_{2, k}), \dots, (x_{n, 1}, x_{n, 2}, \dots, x_{n, k}),
    \end{align*}
    the simulation of unbounded addition by unbounded componentwise addition of tuples is straightforward. Observe that for unbounded multiplication, simply iterating the binary multiplication would result in linear depth. But similar to the method for binary multiplication, we can compute in advance which values contribute to a given index of the resulting tuple. And since the value of a tuple entry in the result can only depend on the values $x_{i, j}$, to simulate an unbounded multiplication gate, we need only an unbounded multiplication gate with a linear number of predecessors (but possibly exponential increase in the wire complexity).
\end{proof}

\begin{corollary}
    For all $i\in \mathbb{N}$, we have $\NCRing[\R]{i} \simequal \NCRing[\C]{i}$ and $\ACRing[\R]{i} \simequal \ACRing[\C]{i}$.
\end{corollary}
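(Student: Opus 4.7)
The plan is simply to observe that $\C = \R[i_2]$ in the placeholder notation of the paper, where $i_2$ denotes a square root of $-1$ (i.e., $i_2^2 = -1$). Once this identification is made, both equalities follow as immediate instances of the preceding theorems with $k = 2$.

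More precisely, I would first recall that, as an $\R$-algebra, $\C$ is obtained by adjoining a single element $i_2$ satisfying $i_2^2 = -1$, so that every complex number can be written uniquely as $a + b \cdot i_2$ with $a,b \in \R$. This is exactly the construction of $\R[i_k]$ from Definition of adjoint elements in the case $k=2$. Therefore $\C$ and $\R[i_2]$ coincide as integral domains with ordering (using, for example, the lexicographic order on pairs $(a,b)$ already discussed as a valid strict total order on $\C$).

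Applying Theorem~\ref{theorem:nc-sim}(3) with $k=2$ yields $\NCRing[\R]{i} \simequal \NCRing[{\R[i_2]}]{i} = \NCRing[\C]{i}$ for all $i \in \N$, and applying Theorem~\ref{theorem:ac-sim}(3) with $k=2$ yields $\ACRing[\R]{i} \simequal \ACRing[{\R[i_2]}]{i} = \ACRing[\C]{i}$ for all $i \in \N$. No new simulation needs to be built: the matrix-based multiplication construction from the proof of Theorem~\ref{theorem:nc-sim}, which uses exactly the defining relation $j^k = -1$ to fold the expanded product back into a $k$-tuple, specializes to the familiar $(a_1 + a_2 i_2)(b_1 + b_2 i_2) = (a_1 b_1 - a_2 b_2) + (a_1 b_2 + a_2 b_1) i_2$ when $k=2$.

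There is no real obstacle here; the only thing worth being careful about is matching the choice of strict total order on $\C$ with the one used implicitly in the proof of Theorem~\ref{theorem:nc-sim}, but since the simulation encodes complex numbers as pairs in $\R^2$ and the lexicographic order on such pairs is both definable from the order on $\R$ and consistent with the ordering the paper associates to $\R[i_2]$, this is immediate. Hence the corollary follows with no further work.
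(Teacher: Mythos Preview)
Your proposal is correct and follows exactly the paper's approach: the paper's entire proof is the one-line observation that $\C = \R[j_{(2)}]$, after which the result is an instance of Theorems~\ref{theorem:nc-sim}(3) and~\ref{theorem:ac-sim}(3) with $k=2$. Your additional remarks about the ordering and the specialization of the multiplication formula are accurate but not needed, since these are already handled in the proofs of the cited theorems.
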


\begin{proof}
    Observe that $\C = \R[j_{(2)}]$.
\end{proof}

\begin{lemma}
    For all $i\in \mathbb{N}$, we have $\NCRing[\Z]{i} \simequal \NCRing[\Q]{i}$ and $\ACRing[\R]{i} \simequal \ACRing[\C]{i}$.
\end{lemma}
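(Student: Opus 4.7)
The plan is to prove both directions for each equation using the same template as Theorem~\ref{theorem:nc-sim} and Theorem~\ref{theorem:ac-sim}. For the first equation, the easy inclusion $\NCRing[\Z]{i} \simsubseteq \NCRing[\Q]{i}$ is witnessed by the identity map $\Z^* \hookrightarrow \Q^*$: since $\Z \subset \Q$, any $\Z$-circuit is literally also a $\Q$-circuit, and the same circuit family computes the characteristic function on integer inputs. (Note that the second equation $\ACRing[\R]{i} \simequal \ACRing[\C]{i}$ was already established by the preceding corollary, so it only needs to be cited here; the analogous argument for $\ACRing[\Z]{i} \simequal \ACRing[\Q]{i}$, if intended, follows the same scheme as the \NC-version described below.)

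For the non-trivial direction $\NCRing[\Q]{i} \simsubseteq \NCRing[\Z]{i}$, the simulation map $f \colon \Q^* \to \Z^*$ sends a rational $r$ to its canonical pair $(p, q) \in \Z^2$ with $q > 0$ and $\gcd(|p|, q) = 1$, so each input element of arity $k$ becomes an integer tuple of arity $2k$. Given a $\NCRing[\Q]{i}$ circuit $C$, we construct a $\NCRing[\Z]{i}$ circuit $C'$ by replacing each $\Q$-gate with a constant-size, constant-depth integer subcircuit operating on such pairs:
\begin{align*}
(p_1, q_1) + (p_2, q_2) &\;\mapsto\; (p_1 q_2 + p_2 q_1,\ q_1 q_2),\\
(p_1, q_1) \times (p_2, q_2) &\;\mapsto\; (p_1 p_2,\ q_1 q_2),\\
(p_1, q_1) < (p_2, q_2) &\;\mapsto\; (p_1 q_2 < p_2 q_1).
\end{align*}
The crucial invariant is that $q > 0$ is preserved by $+$ and $\times$ (since $q_1, q_2 > 0$ implies $q_1 q_2 > 0$); this makes the comparison formula correct without a case distinction on signs. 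Note we do not need to maintain $\gcd(|p|,q)=1$ during intermediate computations, as $(a,b)$ and $(ca, cb)$ represent the same rational for $c > 0$. Every $\Q$-gate is replaced by $O(1)$ integer gates at $O(1)$ additional depth, so size and depth bounds are preserved up to constants, giving a $\NCRing[\Z]{i}$ circuit family for $B$.

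For the \AC-analogue, the binary cases carry over verbatim, and the unbounded fan-in operations are handled as follows. Unbounded multiplication $\prod_{\ell=1}^n (p_\ell, q_\ell) = (\prod_\ell p_\ell,\ \prod_\ell q_\ell)$ uses two unbounded integer multiplication gates. Unbounded addition uses the common-denominator identity
\[
\sum_{\ell=1}^n \frac{p_\ell}{q_\ell} = \frac{\sum_{\ell=1}^n p_\ell \prod_{m \neq \ell} q_m}{\prod_{\ell=1}^n q_\ell},
\]
where the denominator is a single unbounded multiplication and the numerator uses $n$ unbounded multiplication gates (one per index $\ell$ to compute $\prod_{m \neq \ell} q_m$), $n$ binary multiplications to multiply in the $p_\ell$, and a final unbounded addition. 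This is constant depth with $O(n)$ size blowup per original gate, so polynomial size is preserved.

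The main technical obstacle is the bookkeeping of signs in the comparison simulation; the invariant $q > 0$ resolves this cleanly. A secondary concern is the size blowup in the unbounded-addition case, but since it is only linear per gate it stays within the polynomial-size budget of $\ACRing[\Z]{i}$.
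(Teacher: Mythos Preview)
Your approach is essentially the same as the paper's: both use the identity map for $\Z\hookrightarrow\Q$ and represent a rational by a pair $(p,q)$ of integers for the reverse direction. The paper's proof is extremely terse (it literally just names the two simulation maps), whereas you spell out the gate-level simulations, the sign invariant $q>0$ needed for the comparison gate, and the common-denominator trick for unbounded addition in the $\mathrm{AC}$ case; you also correctly flag that the second clause of the statement appears to be a typo for $\ACRing[\Z]{i}\simequal\ACRing[\Q]{i}$, which is exactly what the paper's own proof addresses.
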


\begin{proof}
    For $\NCRing[\Z]{i} \simsubseteq \NCRing[\Q]{i}$ (resp. $\ACRing[\Z]{i} \simsubseteq \ACRing[\Q]{i}$), take $f(x) \coloneqq x$ and for $\NCRing[\Q]{i} \simsubseteq \NCRing[\Z]{0}$ (resp. $\ACRing[\Z]{i} \simsubseteq \ACRing[\Q]{i}$), take $f(x) \coloneqq (a, b)$, where $x = \frac{a}{b}$.
\end{proof}


\begin{lemma}
    $\NCRing[\Q]{i} \subsetneq \NCRing[\R]{i}$.
\end{lemma}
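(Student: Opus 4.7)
The plan is to establish the inclusion as a simulation in the sense of Section~\ref{section:relationship-regarding-rings}, and then rule out the reverse simulation by a cardinality obstruction. Thus I read the statement as asserting $\NCRing[\Q]{i} \simsubsetneq \NCRing[\R]{i}$, consistent with the formalism just developed and with the surrounding lemmas phrased in terms of $\simequal$.

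For $\NCRing[\Q]{i} \simsubseteq \NCRing[\R]{i}$ I would take the canonical embedding $f \colon \Q^* \to \R^*$, $f(\ol{x}) \coloneqq \ol{x}$, which is clearly injective. Given $A \in \NCRing[\Q]{i}$ decided by a $\Q$-circuit family $(C_n)_{n \in \N}$, I would re-interpret each $C_n$ as an $\R$-circuit: the rational constants at constant gates are in particular real, and the operations at $+$-, $\times$-, and $<$-gates agree with their restrictions from $\R$ to $\Q$. The resulting family $(C_n')_{n \in \N}$ has exactly the same size and depth and bounded fan-in as $(C_n)_{n\in\N}$, and therefore decides a language $B \in \NCRing[\R]{i}$. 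On any rational input tuple $\ol{x}$ the run of $C_n'$ is literally the run of $C_n$, so $\ol{x} \in A$ iff $f(\ol{x}) \in B$, giving the required simulation.

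For strictness I would argue by cardinality. Suppose, for contradiction, that $\NCRing[\R]{i} \simsubseteq \NCRing[\Q]{i}$ via some map $g \colon \R^* \to \Q^*$. By the definition of a simulation map, $g$ must be an injection. But
\[
\abs{\R^*} = \Bigl|\bigcup_{k \in \N_0} \R^k\Bigr| = 2^{\aleph_0},
\]
whereas $\abs{\Q^*} = \abs{\bigcup_{k} \Q^k} = \aleph_0$, so no injection $\R^* \hookrightarrow \Q^*$ exists. This contradiction shows $\NCRing[\R]{i} \not\simsubseteq \NCRing[\Q]{i}$, and combined with the previous paragraph yields $\NCRing[\Q]{i} \simsubsetneq \NCRing[\R]{i}$.

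The proof is essentially frictionless: the cardinality obstruction is sharp and independent of $i$, which merely plays the role of pinning down a complexity class within the hierarchy. The only point that deserves explicit checking is the containment step, namely that reinterpreting a bounded-fan-in rational circuit as a real circuit does not change any gate's computation on rational inputs and preserves the size/depth bounds defining $\NCRing[R]{i}$; both are immediate from the fact that $\Q$ is a subring of $\R$ with the induced ordering.
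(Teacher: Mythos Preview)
Your proof is correct and lands on the same obstruction as the paper, though your formulation is cleaner. The paper argues strictness by invoking that $\R$ is a transcendental extension of $\Q$ of infinite degree, so no real number can in general be coded by a finite tuple of rationals; you cut straight to the set-theoretic core and observe that a $\mathfrak{C}$-simulation map is by definition an injection $\R^* \hookrightarrow \Q^*$, which is impossible because $\abs{\R^*} = 2^{\aleph_0} > \aleph_0 = \abs{\Q^*}$. Your route is more elementary and makes the role of the injectivity requirement in the definition of simulation maps completely explicit; the paper's transcendence phrasing gestures at the same counting obstruction but with unnecessary algebraic overhead. You also spell out the (easy) forward inclusion via the identity embedding $\Q \hookrightarrow \R$, which the paper leaves implicit.
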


\begin{proof}
    We use the fact that $\R$ is a trancendental field extension of $\Q$, i.\,e., there is no finite set of numbers $M$ which we can adjoin to $\Q$ in order to get $\Q[M] = \R$. In our setting this mean that we can not simulate all numbers $r \in \R$ by a set of finite tuples $(a_0, \cdots, a_n)$ of numbers where $a_0,\cdots,a_n\in \Q$.
\end{proof}

\section{Conclusion}

In this paper, we introduced algebraic complexity classes with respect to algebraic circuits over integral domains. 
We showed a logical characterization for $\ACRing[R]{0}$ and further characterizations for the $\ACRing[R]{}$ and $\NCRing[R]{}$ hierarchies, using a generalization of the $\GPR$ operator of Durand, Haak and Vollmer~\cite{10.1145/3209108.3209179}.
We constructed a formalism to be able to compare the expressiveness of complexity classes with different underlying integral domains. 
We then showed that using this formalism, we obtain a hierarchy of sets of complexity classes, each set being able to ``simulate'' the complexity classes from the sets below.


For future work it would be interesting to investigate the logical characterizations made in this paper in the uniform setting. 
We know that for the real numbers, the characterization $\ACRing[\R]{0} = \FORing{\R}[\mathrm{Arb}_\R] + \SUMR + \PRODR$ holds both non uniformly and for uniformity criteria given by polynomial time computable circuits ($\mathrm{P}_\R$-uniform), logarithmic time computable circuits ($\mathrm{LT}_\R$-uniform) and first-order definable circuits ($\FORing{\R}$-uniform)~\cite{DBLP:conf/wollic/BarlagV21}.
We believe that the results we presented hold here in analogous uniform settings as well, though this would need to be further examined.

Another open direction is to find interesting problems (potentially even complete) for these new complexity classes. 
This could even provide new insights for the classical case.

A promising approach to the separation of algebraic circuit complexity classes could be an adaption of the approach taken by Cucker, who showed that the problem $\mathrm{FER}$, which essentially asks whether a point lies on a fermat curve, seperates the (logarithmic time uniform) $\NCRing[\R]{i}$-classes~\cite{DBLP:journals/jc/Cucker92}. 
The same proof could also hold for the $\NCRing[\C]{i}$-classes.

Another model deserving of the name ``algebraic circuit'' are arithmetic circuits in the sense of Valiant~\cite{Valiant1979}. 
This model is similar to the model presented here, with the exception that generally, only addition and multiplication gates are permitted.
Maybe the ideas presented in this paper can lead to further insights with regards to this model of computation as well.

In the Boolean case, in addition to the $\mathrm{AC}$ and $\mathrm{NC}$ hierarchies, one commonly investigated hierarchy is the so called $\mathrm{SAC}$ hierarchy. 
This hierarchy is defined by bounding the fan-in of only one gate type, i.\,e., either the conjunction or the disjunction gates. 
It is known that it does not make a difference in that setting which gate type is bounded.
A possible next step is to define a sensible analogue of the SAC hierarchy in the algebraic setting.
We believe that in the algebraic case, it does make a difference which gate type we bound.
This model could then possibly be useful to investigate algebraic structures where the respective operations do not adhere to the same axioms.

\newpage

\begin{appendix}

\section{Bounding circuit depth}

\begin{definition}
    Let $C$ be a circuit and let $L_k$ be the set of all gates $g\in C$ with $\textit{depth}(g) = k$. Then we call $L_k$ the \emph{$k$th layer of $C$}.
\end{definition}

\begin{lemma}\label{lemma:bound-circuit-depth-by-factor-for-layer}
    Let $\mathcal C = (C_n)_{n\in \mathbb N}$ be a circuit family, $f$ a sublinear function and $g_C(k)$ the number of gates at layer $k$ for a circuit $C\in \mathcal C$. Furthermore, let $\alpha \coloneqq{} 2^{-\frac{\log_2 n}{f(n)}}$.
    
    A circuit $C$ has depth $\bO(f(n))$ if we request that for every layer $k$, the inequality $g_C(k) \le \lfloor \alpha \cdot g_C(k-1)\rfloor$ holds.
\end{lemma}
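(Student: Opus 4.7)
The plan is to unfold the per-layer contraction into a geometric bound on $g_C(k)$, and then translate the requirement $g_C(d)\ge 1$ at the deepest layer into an upper bound on $d$.

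First, I would observe that since $\lfloor \alpha \cdot g_C(k-1)\rfloor \le \alpha \cdot g_C(k-1)$, a straightforward induction on $k$ yields $g_C(k) \le \alpha^k \cdot g_C(0)$. Here $g_C(0)$ is the number of input and constant gates at layer $0$, which, since $\mathcal{C}$ is a circuit family and we are interested in polynomial-size behavior, is bounded by $p(n)$ for some polynomial $p$ (in the worst case simply by the size of $C$).

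Next, let $d$ be the depth of $C$. Then the output gate lies on layer $d$, so $g_C(d) \ge 1$. Combining with the geometric bound, we get $1 \le \alpha^d \cdot p(n)$, which rearranges to $d \le \log_{1/\alpha} p(n) = \dfrac{\log_2 p(n)}{\log_2(1/\alpha)}$. Plugging in $\log_2(1/\alpha) = \dfrac{\log_2 n}{f(n)}$ gives
\[
d \;\le\; \frac{\log_2 p(n) \cdot f(n)}{\log_2 n}.
\]
Since $p$ is a polynomial, $\log_2 p(n) = O(\log_2 n)$, and the $\log_2 n$ factors cancel, leaving $d \in O(f(n))$.

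I do not expect any serious obstacle here: the only subtlety is making sure the floor in $\lfloor \alpha \cdot g_C(k-1)\rfloor$ does not cause trouble (it only strengthens the upper bound, so the geometric unfolding is immediate) and clearly stating the standing assumption that the circuits are of polynomial size (otherwise $g_C(0)$ could be arbitrarily large and the claim would be vacuous). If one does not want to assume polynomial size a priori, an alternative is to take $g_C(0) \le n + c$ where $c$ is the number of constant gates at layer $0$ and then invoke sublinearity of $f$ together with $\log_2(n+c) = O(\log_2 n)$ to absorb constants. Either way, the bound $d \in O(f(n))$ follows directly from the geometric decay of $g_C(k)$.
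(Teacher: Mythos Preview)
Your argument is correct and follows essentially the same route as the paper: both unfold the per-layer contraction into a geometric bound $g_C(k)\le \alpha^k\cdot g_C(0)$ and then observe that $\alpha^{f(n)}$ multiplied by the starting layer size drops to at most $1$, forcing the depth to be $O(f(n))$. The paper's version is terser---it tacitly takes $g_C(0)=n$ and directly computes $n\cdot\alpha^{f(n)}=n\cdot 2^{-\log_2 n}=1$---whereas you are more careful about what $g_C(0)$ can be and about the floor, but the underlying idea is identical.
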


\begin{proof}
    We want to bound the depth of the circuit (that is, the number of layers $k$) by the sublinear function $f$. For any input size $n$, we therefore set $k = f(n) < n$. Furthermore, if we require that the inequality $g_C(k) \le \lfloor \alpha \cdot g_C(k-1)\rfloor$ holds, the circuit reaches its maximum depth when $n\alpha^k\le 1$ holds, since the factor of $\alpha$ is applied $k$ times, once for each layer. Substitution yields
    \begin{align*}
        n\cdot \alpha^k\le 1 &\Longleftrightarrow n\cdot {2^{-\frac{\log_2 n}{f(n)}}}^k\le 1\\
        & \Longleftrightarrow n\cdot 2^{-\log_2 n}\le 1.\qedhere
    \end{align*}
\end{proof}

\begin{corollary}\label{corollary:factor-for-GPR_i}
    A circuit $C$ has depth $\bO((\log_2 n)^i)$ if we request that for every layer $k$, the inequality $g_C(k) \le \left\lfloor 2^{-\frac{\log_2 n}{(\log_2 n)^i}} \cdot g_C(k-1)\right\rfloor$ holds.
\end{corollary}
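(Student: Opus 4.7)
The plan is to obtain this corollary as an immediate instantiation of Lemma~\ref{lemma:bound-circuit-depth-by-factor-for-layer}. Specifically, I would set $f(n) \coloneqq (\log_2 n)^i$ and verify the hypotheses of the lemma for this choice of $f$.

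First I would check that $f(n) = (\log_2 n)^i$ is sublinear, which holds for every fixed $i \in \N$ (for $i=0$ it is the constant function $1$, and for $i \geq 1$ we have $(\log_2 n)^i = o(n)$). With this choice of $f$, the factor $\alpha$ appearing in Lemma~\ref{lemma:bound-circuit-depth-by-factor-for-layer} becomes
\[
    \alpha = 2^{-\frac{\log_2 n}{f(n)}} = 2^{-\frac{\log_2 n}{(\log_2 n)^i}},
\]
which matches exactly the factor appearing in the hypothesis of the corollary. Since the assumed layerwise inequality
\[
    g_C(k) \le \left\lfloor 2^{-\frac{\log_2 n}{(\log_2 n)^i}} \cdot g_C(k-1)\right\rfloor
\]
is therefore precisely the inequality $g_C(k) \le \lfloor \alpha \cdot g_C(k-1)\rfloor$ required by the lemma, the lemma applies and yields a depth bound of $\bO(f(n)) = \bO((\log_2 n)^i)$.

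There is no real obstacle here, since the corollary is essentially just a renaming of the general statement with a specific choice of $f$. The only minor point worth noting in the write-up is the sublinearity check for $f$, so that the lemma is formally applicable; everything else is pure substitution.
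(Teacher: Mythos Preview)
Your proposal is correct and matches the paper's approach exactly: the corollary is stated immediately after Lemma~\ref{lemma:bound-circuit-depth-by-factor-for-layer} with no separate proof, since it is just the instantiation $f(n) = (\log_2 n)^i$. Your added remark about verifying sublinearity is a reasonable bit of care that the paper leaves implicit.
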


\section{Guarded Predicative Recursion}\label{section:GPR}



\begin{notation}
    Similarly to the relativized aggregators in Notation~\ref{notation:relativized_aggregators}, for relativized quantifiers, we write 
    \[
    (\exists x_1, \dots, x_k.\varphi)\psi
    \]
    as a shorthand for $\exists x_1 \dots \exists x_k (\varphi \land \psi)$ and
    \[
    (\forall x_1, \dots, x_k.\varphi)\psi
    \]
    as a shorthand for $\forall x_1, \dots, x_k (\varphi \to \psi)$.
\end{notation}

For a $\FO$-formula $\varphi$ and a relation variable $P$, we write $\varphi(P^+)$ if $P$ does not occur in the scope of a negation in $\varphi$.

\begin{definition}[$\GPR^i$]\label{definition:GPR}
    Let $\mathcal{R}$ be a set of relations such that \BIT is definable in $\FO[\mathcal{R}]$.
    The set of $\FO[\mathcal{R}] + \GPR^i$-formulae over $\sigma$ is the set of formulae by the grammar for $\FO[\mathcal{R}]$-formulae over $\sigma$ extended by the rule
    \begin{align*}
        \varphi \Coloneqq{} [P(\ol{x},\ol{y_1}, \dots, \ol{y_i}, \ol{y_{i+1}}) \equiv \theta(\ol{x},\ol{y_1}, \dots, \ol{y_i}, \ol{y_{i+1}},P^+)]\psi(P^+),
    \end{align*}
    where $\psi$ and $\theta$ are $\FO[\mathcal{R}]$-formulae over $\sigma$, $\ol{x}, \ol{y_1}, \dots, \ol{y_i}, \ol{y_{i+1}}$ are tuples of variables, $P$ is a relation variable and each atomic sub-formula involving $P$ in $\theta$
    \begin{enumerate}
        \item is of the form $P(\ol{x}, \ol{y_1}, \dots, \ol{y_i}, \ol{y_{i+1}})$, the $\ol{y_1}, \dots, \ol{y_i}, \ol{y_{i+1}}$ are in the scope of a guarded quantification
        \begin{align*}
            Q\ol{y_1}, \dots, \ol{y_i}, \ol{y_{i+1}}.\left(\bigvee\limits_{j = 1}^i \left( \ol{z_j} \leq \ol{y_j} / 2 \land \bigwedge\limits_{k = 1}^{j - 1} \ol{z_k} \leq \ol{y_k}\right) \land \xi(\ol{y_1}, \dots, \ol{y_i}, \ol{y_{i+1}}, \ol{z_1}, \dots, \ol{z_i}, \ol{z_{i+1}}) \right)
        \end{align*}
        with $Q \in \{\forall,\exists\}$, $\xi \in \FO[\mathcal{R}]$ with $\xi$ not containing any relation symbols for relations given in the input structure and
    
        \item never occurs in the scope of any quantification not guarded this way.
    \end{enumerate}
    
    The semantics for $\GPR^i$ are defined analogously to the semantics for $\GFRi$ in Definition~\ref{definition:gfr}.
\end{definition}

\end{appendix}


\printbibliography{}

%
%

%
%
%
%
%
%
%
\end{document}